\newcommand{\be}{\begin{equation}}
\newcommand{\ee}{\end{equation}}
\newcommand{\bq}{\begin{eqnarray}}
\newcommand{\eq}{\end{eqnarray}}
\newcommand{\bea}{\begin{eqnarray}}
\newcommand{\eea}{\end{eqnarray}}
\newcommand{\ba}{\begin{align}}
\newcommand{\ea}{\end{align}}
\newcommand{\1}{\mathbbm{1}}
\newcommand{\ket}[1]{\left | \, #1 \right\rangle}
\newcommand{\bra}[1]{\left \langle #1 \, \right |}
\newcommand{\avr}[1]{\left \langle#1 \right \rangle}
\newcommand{\tr}[1]{{\rm tr}\left[{#1}\right]}
\newcommand{\ptr}[2]{{\rm tr}_{#1}\left[{#2}\right]}
\newcommand{\raw}{\rightarrow}
\newcommand{\bR}{\mathbbm{R}}
\newcommand{\bN}{\mathbbm{N}}
\newcommand{\bC}{\mathbbm{C}}
\newcommand{\bL}{\mathbbm{L}}
\newcommand{\cH}{\mathcal{H}}
\newcommand{\cE}{\mathcal{E}}
\newcommand{\cL}{\mathcal{L}}
\newcommand{\cA}{\mathcal{A}}
\newcommand{\cS}{\mathcal{S}}
\newcommand{\cM}{\mathcal M}
\newcommand{\cB}{\mathcal B}
\newcommand{\half}{\frac{1}{2}}
\newcommand{\Ent}{{\rm Ent}}
\newtheorem{theorem}{Theorem}
\newtheorem{lemma}[theorem]{Lemma}
\newtheorem{corollary}[theorem]{Corollary}
\newtheorem{proposition}[theorem]{Proposition}
\newtheorem{definition}[theorem]{Definition}
\def\Proof{\noindent\textsc{Proof:}}
\def\proof{\Proof}
\def\qed{\leavevmode\unskip\penalty9999 \hbox{}\nobreak\hfill
     \quad\hbox{\leavevmode  \hbox to.77778em{%
               \hfil\vrule   \vbox to.675em%
               {\hrule width.6em\vfil\hrule}\vrule\hfil}}
     \par\vskip3pt}
    {\hspace*{\fill}$\Box$\vspace{1.5ex}\par}
\newcommand{\Sp}{\,\,\,\,\,\,}
\newcommand{\no}{\nonumber\\}
\definecolor{Purple}{cmyk}{0.6,0.7,0,0}
\begin{document}

\title{\sc{\Large Hypercontractivity of quasi-free quantum semigroups}}
\author{ Kristan Temme$^{1,2}$ , Fernando Pastawski$^2$, Michael  J. Kastoryano$^3$}
\affiliation{$^1$Center for Theoretical Physics, Massachusetts Institute of Technology, Cambridge, MA 02139, USA}
\affiliation{$^2$IQIM, California Institute of Technology, Pasadena, CA 91125, USA}
\affiliation{$^3$Dahlem Center, Freie Universit\"at Berlin, 14195 Berlin, Germany}
\date{\today}

\begin{abstract} 
Hypercontractivity of a quantum dynamical semigroup has strong implications for its convergence behavior and entropy decay rate. A logarithmic Sobolev inequality and the corresponding logarithmic Sobolev constant can be inferred from the semigroup's hypercontractive norm bound. We consider completely-positive quantum mechanical semigroups described by a Lindblad master equation. To prove the norm bound, we follow an approach which has its roots in the study of classical rate equations. We use interpolation theorems for non-commutative $\bL_p$ spaces to obtain a general hypercontractive inequality from a particular $p \raw q$-norm bound. Then, we derive a bound on the $2 \raw 4$-norm from an analysis of the block diagonal structure of the semigroup's spectrum. We show that the dynamics of an $N$-qubit graph state Hamiltonian weakly coupled to a thermal environment is hypercontractive.  As a consequence this allows for the efficient preparation of graph states in time ${\rm poly}(\log(N))$ by coupling at sufficiently low temperature. Furthermore, we extend our results to gapped Liouvillians arising from a weak linear coupling of a free-fermion systems.
\end{abstract}
\maketitle

\section{Introduction}

Characterizing the convergence behavior of quantum channels and of quantum dynamical semigroups has become an important topic in quantum information sciences. In particular, a good understanding of how fast a dissipative quantum process approaches stationarity has far reaching consequences for the physics of open many body systems, where one would like to characterize different phases of matter directly from properties of the Liouvillian (master equation). Recently, it has been shown that open systems which converge rapidly to stationarity are stable \cite{cubitt2013stability,kastoryano2013rapid}, in the sense that the expectation values of local observables do not depend on distant perturbations of the Liouvillian. 

For thermal systems, the high temperature phase is typically identified by both short range correlations of local observables and by rapid global convergence of the thermal dynamics. A characterization of systems whose Liouvillian is gapped and local on a lattice is very desirable for the simulation of open many body systems on a classical or quantum computer. Gibbs samplers \cite{Temme2011} are a particularly important class of systems whose convergence we would like to understand better.

In this work, we analyze a particular form of convergence of quantum dynamical semigroups called \textit{hypercontractivity}. A semigroup is called hypercontractive when is not only a contraction for different $\bL_p$ norms but also acts as a contraction for $2 \raw p$-norms for larger $p$. This in turn can be shown to be equivalent to a set of logarithmic Sobolev inequalities which lead to very strong mixing bounds in trace norm. Hypercontractivty was first considered by Nelson \cite{nelson1973quantum} in the context of quantum field theory. It was subsequently related to logarithmic Sobolev inequalities by Gross \cite{gross1975logarithmic} for Gaussian semigroups. Since these seminal studies, there has been a large body of work on this and related topics in the mathematical physics literature \cite{bodineau2000hypercontractivity,carbone2008hypercontractivity,carbone2004optimal,carlen1993optimal,lindsay1992fermionic,gross1993logarithmic,simon1972hypercontractive,guionnet2003lectures,olkiewicz1999hypercontractivity} (see Ref. \cite{davies1992hypercontractivity} for a bibliographical review) , and more recently in the context of quantum information theory \cite{kastoryano2013quantum,ben2008hypercontractive,montanaro2012some,montanaro2008quantum,king2012hypercontractivity,king2013multiplicativity}.

In this paper, we use a strategy introduced in Ref. \cite{simon1972hypercontractive} and generalized in Ref. \cite{bodineau2000hypercontractivity}, to show hypercontractivity for specific classes of quantum semigroups, such as a quantum generalization Gross' Gaussian semigroup \cite{carbone2008hypercontractivity}. The method is based on a block decomposition of the semigroup in terms of dynamical excitations. In the original setting of high energy physics \cite{nelson1973quantum}  the excitations correspond to elementary particles. Here we analyse finite quantum systems which are quasi-free, meaning that they can be reduced to free systems upon some local unitary or orthogonal transformation. 

In the remainder of the introduction, we describe the formal setting and provide some background on hypercontractivity and log-Sobolev inequalities. In section \ref{sec:generalHyper} we describe the strategy for proving hypercontractivity, involving an interpolation theorem, and the block decomposition mentioned above. As an example, we show that a tensor product of independent semigroups is hypercontractive. In section \ref{sec:DaviesMaps} we introduce the main class of semigroups which will concern us in this work: the Davies generators, which are a modeling of a Markovian process that drives systems to the thermal state of some specified Hamiltonian. We then go on to prove hypercontractivity of the Davies generators for two specific classes of Hamiltonians: graph state Hamiltonians and free-fermionic Hamiltonians. In both cases we can infer, through the equivalence between hypercontractivity and log-Sobolev inequalities, that the thermal state of these two classes of Hamiltonians can be prepared very efficiently by coupling to a thermal bath. 
Finally, we discuss the implications of our results for the efficient preparation of graph states by cooling.

\subsection{Formal setting}

In order to present our results, we will need to introduce some notation and definitions. Throughout this paper we will be working exclusively with operators acting on finite Hilbert spaces ($d$-dimensional), which are isomorphic to the algebra of $d$-dimensional complex matrices $\cM_d \cong \bC^{d\times d}$, when equipped with an inner product. We denote the set of $d$-dimensional Hermitian operators $\cA_d=\{X\in\cM_d,X=X^\dag\}$, as well as the subset of positive definite operators $\cA^+_d=\{X \in \cA_d, X > 0\}$. The set of states will be denoted $\cS_d=\{X \in \cA_d,X\geq0,\tr{X}=1\}$, and the full rank states will be analogously denoted $\cS_d^+$. Observables will always be represented by lower case Latin letters ($f,g\in \cA_d$), and states by Greek letters ($\rho,\sigma \in \cS_d$). The results presented below are expressed in the framework of non-commutative $\bL_p$ spaces \cite{haagerup1979lp,pisier2003non}. 
The central property of the $\bL_p$ spaces, is that the norm as well as the scalar product is weighted with respect to some full rank reference state $\sigma \in \cS_d^+$. 
The  $\bL_p$-norm with respect to some $\sigma\in\cS_d^+$, is defined for any $f\in\cA_d$ as
\be 
\| f \|_{p,\sigma} =\tr{\;|\; \sigma^{\frac{1}{2p}} f \sigma^{\frac{1}{2p}}\;|^p\;}^{\frac{1}{p}}.
\ee
Similarly, the $\bL_p$-inner product for any $f,g\in\cA_d$ is given by 
\be
\avr{f,g}_\sigma = \tr{\sigma^{1/2} f^\dag \sigma^{1/2} g}.
\ee

The time evolution of an observable ($f_t\in\cA_d$) will be described by one-parameter semigroups of completely-positive trace preserving maps (cptp-maps), whose generator (Liouvillian) can always be written in standard \textit{Lindblad form}
\be
\partial_t f_t =  \cL(f_t) \equiv i[H,f_t] + \sum_i L^\dag_i f_t L_i - \frac{1}{2}\{L^\dagger_i L_i , f_t\}_+,
\label{eqn:Liouv}\ee
where $L_i \in  \cM_d$ are Lindblad operators and $H \in \cA_d$ is a Hamiltonian operator. We will denote the semigroup generated by $\cL$ by $T_t \equiv \exp(t\cL)$. A semigroup is said to be \textit{primitive} if it has a unique full-rank stationary state. We will typically denote the fixed point of the semigroup by $\sigma$. The $\bL_p$ norm and other weighted forms will always be expressed with respect to the unique fixed point of $\cL$.\\

An important concept for our analysis is the detailed balance of the semigroup's generator. A general definition of detailed balance for Markovian generators in  $W^*$ - algebras has been in given in \cite{kossakowski1977quantum}. However, since we work on a finite dimensional state space and already assume the generator to be of Lindblad form we follow \cite{alicki2007quantum,temme2010chi2} and work with the definition below. 

\begin{definition}[Detailed balance] \label{def:DB}We say a Liouvillian $\cL:\cM_d\rightarrow\cM_d$ satisfies \textbf{detailed balance} (or is \textbf{reversible}) with respect to the state $\sigma\in\cS_d^+$, if for any $f,g\in\cA_d$, 
\be\avr{f,\cL(g)}_\sigma=\avr{\cL(f),g}_\sigma.\ee
\end{definition}

We are now in a position to state the definition of \textit{hypercontractivity}, which will be the main object of study in this paper.  

\begin{definition}[Hypercontractivity] \label{scooter}
 Let $T_t : \cM_d \raw \cM_d$ be a primitive semigroup with stationary state $\sigma$. We say $T_t$ is \textbf{hypercontractive} 
if there exist constants $\alpha,t_0>0$ such that for any $f\in\cA^+_d$, 
\be
	\| T_t(f)\|_{p(t),\sigma} \leq \| f\|_{2,\sigma},\label{eqn:hyper}
\ee 
with $p(t) = 1 + e^{2\alpha t}$, whenever $t\geq t_0$.
\end{definition}

The optimal (largest) constant $\alpha$ which satisfies Eqn. (\ref{eqn:hyper}) is related to the \textit{log Sobolev constant}, which we define below. The log-Sobolev constant is defined in terms of a variational optimization over an entropy functional and the Dirichlet form of $\cL$. See Ref. \cite{kastoryano2013quantum} for a detailed analysis. 

The Dirichlet form of $\cL$ is defined as  
	\be \cE(f)= -\avr{f,\cL(f)}_\sigma, \ee 
	whereas the  $\bL_2$ relative entropy is given by
	\bq \label{ent2:def} \Ent(f) &=& \tr{\left(\sigma^{1/4}f\sigma^{1/4}\right)^2 \log\left(\sigma^{1/4}f\sigma^{1/4}\right)} 
		 - \frac{1}{2} \tr{\left(\sigma^{1/4}f\sigma^{1/4}\right)^2 \log\left(\sigma\right)} \\ \nonumber
		    &&  -\frac{1}{2}\| f \|_{2,\sigma}^2\log\left(\| f \|_{2,\sigma}^2\right). \eq

\begin{definition}[Logarithmic Sobolev inequality]\label{Def:LSI}
Let $\cL:\cM_d\rightarrow\cM_d$ be a primitive Liouvillian with stationary state $\sigma\in\cS_d^+$. 
We say that $\cL$ satisfies a log-Sobolev inequality, if there exists a positive constant $\alpha>0$ such that
\be \alpha\Ent(f)\leq\cE(f),\label{Eqn:LSI}\ee for all $f\in\cA_d^+$. We call the largest $\alpha$ for which Eqn.~(\ref{Eqn:LSI}) 
holds the log-Sobolev constant. 
\end{definition}

In order to rigorously formulate the equivalence between hypercontractivity and log-Sobolev inequalities, we need to invoke a property of quantum semigroups called \textit{$\bL_p$-regularity}. This condition is elaborate to describe, and not very insightful, and we refer the interested reader to Ref. \cite{kastoryano2013quantum, olkiewicz1999hypercontractivity} for a detailed description and analysis. Unless otherwise states, all of the results in this paper hold without the additional assumption of $\bL_p$ regularity, and hence we will not dwell on it further. 

\begin{theorem}[Hypercontractivity and log-Sobolev inequality \cite{diaconis1996logarithmic,kastoryano2013quantum,olkiewicz1999hypercontractivity}]\label{thm:hypervsLS} Let $\cL:\cM_d\rightarrow\cM_d$ be a primitive reversible Liouvillian with stationary state $\sigma$, and let $T_t$ be its associated semigroup. Then 

\begin{enumerate}
\item If there exists a $\beta>0$ such that for any $t>0$, $||T_t(f)||_{p(t),\sigma}\leq||f||_{2,\sigma}$ for all $f\in\cA_d^+$ and $2\leq p(t)\leq1+e^{2\beta t}$. Then $\cL$ satisfies a  log-Sobolev inequality with $\alpha\geq\beta$. 
\item If $\cL$ is $\bL_p$-regular, and has a log-Sobolev constant $\alpha>0$, then $||T_t(f)||_{p(t),\sigma}\leq||f||_{2,\sigma}$ for all $f\in\cA_d^+$, and any $t>0$ when $2\leq p(t)\leq1+e^{2\alpha t}$.
\end{enumerate}
\end{theorem}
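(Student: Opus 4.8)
The plan is to run Gross' classical differentiation argument in the weighted non-commutative setting. Both implications rest on a single computation. Fix $f\in\cA_d^+$; by continuity of every quantity below we may assume $f>0$, so that the operator-valued map $t\mapsto \sigma^{1/2p(t)}\,T_t(f)\,\sigma^{1/2p(t)}$ is smooth and strictly positive for a smooth curve $p(t)$ with $p(0)=2$, and hence $F(t):=\|T_t(f)\|_{p(t),\sigma}$ is differentiable. Writing $g_t=T_t(f)$ and $\log F(t)=\tfrac1{p(t)}\log\tr{\big(\sigma^{1/2p(t)}g_t\sigma^{1/2p(t)}\big)^{p(t)}}$, one differentiates, using $\dot g_t=\cL(g_t)$ together with integral (Duhamel) representations of the operator powers to account for the $t$-dependence of the exponents. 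The outcome is the familiar de Bruijn-type identity: there are nonnegative functionals $\Ent_{p}$ and $\cE_{p}$ — the natural $\bL_p$ analogues of the entropy of Eqn.~(\ref{ent2:def}) and of the Dirichlet form $\cE$, with $\Ent_{2}=\Ent$ and $\cE_{2}=\cE$, the nonnegativity of $\cE_p$ following from detailed balance — and positive prefactors depending only on $p$, such that
\be
\frac{d}{dt}\log F(t)\;=\;\frac{1}{p(t)^2\,F(t)^{p(t)}}\Big(\,p'(t)\,\Ent_{p(t)}(g_t)\;-\;p(t)^{2}\,\cE_{p(t)}(g_t)\,\Big).
\ee
I will not belabour the bookkeeping of the prefactors; it is the content of the references cited in the statement.

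For the first implication, apply the hypothesis with the extremal curve $p(t)=1+e^{2\beta t}$, for which $p(0)=2$ and $p'(0)=2\beta$. The assumed bound then reads $F(t)\le F(0)=\|f\|_{2,\sigma}$ for all $t\ge0$, so $t=0$ is a maximum of $F$ on $[0,\infty)$ and $\left.\tfrac{d}{dt}\right|_{t=0^+}\log F(t)\le0$. Inserting $t=0$ and $g_0=f$ into the identity above makes the bracket nonpositive, i.e. $p'(0)\,\Ent(f)\le p(0)^{2}\,\cE(f)$; after collecting the constants this rearranges exactly to the log-Sobolev inequality $\beta\,\Ent(f)\le\cE(f)$. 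Since $f\in\cA_d^+$ was arbitrary, $\cL$ satisfies a log-Sobolev inequality with $\alpha\ge\beta$.

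For the second implication we are handed the $p=2$ log-Sobolev inequality $\alpha\,\Ent(f)\le\cE(f)$ and must produce the norm bound along $p(t)=1+e^{2\alpha t}$; since $\|\mathbbm 1\|_{p,\sigma}=1$ and the weighted norms $\|\cdot\|_{p,\sigma}$ are monotone increasing in $p$, it suffices to treat this extremal curve, the case $2\le p(t)\le1+e^{2\alpha t}$ following a fortiori. By the identity it is enough to show $\tfrac{d}{dt}\log F(t)\le0$ for all $t\ge0$, i.e. $p'(t)\,\Ent_{p(t)}(g_t)\le p(t)^{2}\,\cE_{p(t)}(g_t)$; as $p'(t)=2\alpha e^{2\alpha t}=2\alpha\big(p(t)-1\big)$, this is an $\bL_{p}$-log-Sobolev inequality at the exponent $p=p(t)$ with the correctly rescaled constant. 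The assertion that the single inequality at $p=2$ with constant $\alpha$ upgrades to this whole family is precisely what $\bL_p$-regularity provides — it guarantees $\alpha_p\ge\alpha_2$ (equivalently, that $\cE_p$ dominates the suitably normalised $\Ent_p$ with the $t=0$ constant) for all $p\ge2$. Feeding this in makes $\tfrac{d}{dt}\log F(t)\le0$ for every $t$, and integrating from $0$ yields $F(t)\le F(0)=\|f\|_{2,\sigma}$.

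The main obstacles are two. The first is making the differentiation rigorous: one is differentiating $\tr{(\sigma^{1/2p(t)}g_t\sigma^{1/2p(t)})^{p(t)}}$ where both the operator inside and the exponent move and the factors do not commute, so isolating clean $\Ent_p$ and $\cE_p$ terms requires integral representations of operator powers and the $\bL_p$ Hölder/duality inequality rather than naive calculus. The second, and the reason part~2 carries an extra hypothesis, is the transfer of a log-Sobolev inequality from $p=2$ to the full range $p\in[2,1+e^{2\alpha t}]$: in the commutative case this is automatic, but in the non-commutative $\bL_p$ setting it fails in general and is exactly what $\bL_p$-regularity is designed to repair. Everything else — smoothness for $f>0$ and density to extend to $f\ge0$, nonnegativity of $\cE_p$ from detailed balance, and monotonicity of $\|\cdot\|_{p,\sigma}$ in $p$ — is routine.
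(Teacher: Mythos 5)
The paper does not actually prove this theorem --- it is imported wholesale from the cited references (Diaconis--Saloff-Coste for the classical case, Olkiewicz--Zegarlinski and Kastoryano--Temme for the non-commutative one), so there is no in-paper proof to compare against. Your sketch is the standard Gross differentiation argument that those references use, and its structure is sound: the de Bruijn-type derivative identity, evaluation at $t=0$ for part 1, and the $\bL_p$-regularity-mediated transfer of the $p=2$ inequality to the whole curve for part 2 (including the correct observation that monotonicity of $\|\cdot\|_{p,\sigma}$ in $p$ reduces the claim to the extremal curve, and that this transfer is exactly the non-commutative obstruction that regularity is hypothesized to remove). The one place where "collecting the constants" is not innocent is the normalization of the entropy functional. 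The derivative formula that this paper itself quotes (in the proof of its Theorem \ref{thm:interpolation}, from Lemma 3.7 of Ref.~\cite{olkiewicz1999hypercontractivity}) reads
\be
\frac{d}{dt}\left.\|T_t(f)\|_{p(t),\sigma}\right|_{t=0}=\|f\|_{2,\sigma}^{-1}\left(\frac{\dot p(0)}{p(0)^2}\,\Ent(f)-\cE(f)\right),
\ee
and along $p(t)=1+e^{2\beta t}$ one has $\dot p(0)/p(0)^2=\beta/2$, so a literal reading of your argument with the $\Ent$ of Eqn.~(\ref{ent2:def}) yields $\tfrac{\beta}{2}\Ent(f)\leq\cE(f)$, i.e.\ $\alpha\geq\beta/2$, not $\alpha\geq\beta$. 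The resolution is that the entropy appearing in the de Bruijn identity of Ref.~\cite{olkiewicz1999hypercontractivity} is twice the $\Ent$ of Eqn.~(\ref{ent2:def}) (the discrepancy between $\tr{X^2\log X^2}$ and $\tr{X^2\log X}$); with the conventions of Definition \ref{Def:LSI} the factors conspire to give exactly $\alpha\geq\beta$. You should make this bookkeeping explicit rather than deferring it, since it is the only step at which the claimed constant, as opposed to a constant, is at stake; everything else in your outline is the argument of the cited references.
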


To prove a logarithmic Sobolev inequality, it therefore suffices to show hypercontractivity and visa versa. In many cases it is easier to prove hypercontractivity directly and deduce from it a bound on the log-Sobolev constant.

One of the main applications of hypercontractivity and log-Sobolev inequalities is that they imply very strong bounds on the mixing time of the semigroup. In particular, if $\cL$ is a primitive semigroup with stationary state $\sigma$, and spectral gap $\lambda$, then the best generic exponential bound that can be obtained for convergence in trace norm is
\be 
\sup_\rho||e^{t\cL}(\rho)-\sigma||_1\leq \sqrt{||\sigma^{-1}||} e^{-t \lambda} \label{SGmixing}
\ee
Whereas if the semigroup is $\bL_p$ regular and satisfies a log-Sobolev inequality with log-Sobolev constant $\alpha$, then 
\be
 \sup_\rho||e^{t\cL}(\rho)-\sigma||_1\leq \sqrt{2\log(||\sigma^{-1}||)} e^{-t \alpha}. \label{LSmixing}
 \ee
Here we denote by $\| A \|$ the operator norm of the matrix $A$. Hence, given that $||\sigma^{-1}||\geq d$, where $d$ is the size of the full matrix algebra, if $\alpha$ and $\lambda$ are both independent of $d$, then the log-Sobolev bound in Eqn. (\ref{LSmixing}) is exponentially tighter than the spectral gap bound of Eqn. (\ref{SGmixing}). An compelling application of the log-Sobolev inequalities is in proving stability of dissipative dynamics, where a log-Sobolev constant is sufficient to guarantee stability, whereas a constant gap does not seem to suffice \cite{cubitt2013stability}.


\section{Proving hypercontractivity in non-commutatice $\bL_p$ -spaces}\label{sec:generalHyper}

A limitation of the formulation of theorem \ref{thm:hypervsLS} is that in order to infer a log-Sobolev inequality, it is necessary for hypercontractivity to hold for all $t\geq 0$. In practice, this condition might appear very difficult to satisfy. Below we show that if the Liouvillian is reversible and gapped, then given some $t_0\geq0$, showing that the Hypercontractive inequality Eqn. (\ref{eqn:hyper}) holds for any $t\geq t_0$, implies that the Liouvillian satisfies a log-Sobolev inequality. This theorem relies strongly on the Stein-Weiss interpolation theorem \cite{stein1971introduction,kosaki1984applications}. 

\begin{theorem}[Interpolation theorem]\label{thm:interpolation}
Let $\cL:\cM_d\rightarrow\cM_d$ be a primitive reversible Liouvillian with stationary state $\sigma$ and spectral gap $\lambda>0$, and let $T_t$ be its associated semigroup. Fix $2< q \leq \infty$ and assume there exist $t_q,M_q>0$ such that $||T_{t_q}||_{2\rightarrow q,\sigma}\leq M_q$, then
\be \alpha \geq \frac{(1-2/q)\lambda}{2(\lambda t_q+\log(M_q) + (q-2)/q)}\ee
\end{theorem}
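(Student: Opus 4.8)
The plan is to combine three inputs into a hypercontractive inequality valid for \emph{all} $t>0$ along a curve of the form $p(t)=1+e^{2\alpha t}$, with $\alpha$ exactly the claimed quantity, and then invoke part (1) of Theorem~\ref{thm:hypervsLS} to conclude that this $\alpha$ lower bounds the log-Sobolev constant. The three inputs are: (i) reversibility makes $\cL$ self-adjoint on the weighted space $\bL_2(\sigma)$, hence it has real spectrum with a simple eigenvalue $0$ (eigenvector $\1$) and all other eigenvalues $\leq -\lambda$; together with $T_t(\1)=\1$ and $\tr{\sigma T_t(f)}=\tr{\sigma f}$ this yields both the Poincar\'e inequality $\lambda\,\Var_\sigma(f)\leq\cE(f)$ and the $\bL_2$-decay $\|T_t(f)-\tr{\sigma f}\1\|_{2,\sigma}\leq e^{-\lambda t}\,\|f-\tr{\sigma f}\1\|_{2,\sigma}$; (ii) the hypothesis $\|T_{t_q}\|_{2\raw q,\sigma}\leq M_q$; (iii) contractivity of $T_t$ on every $\bL_p(\sigma)$.

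The core step is interpolation. Using the Stein--Weiss interpolation theorem for the non-commutative weighted $\bL_p$ spaces, I would interpolate the $\bL_2\raw\bL_q$ bound at time $t_q$ against the $\bL_2\raw\bL_2$ contraction to obtain
\be
\|T_{t_q}\|_{2\raw p,\sigma}\leq M_q^{\theta(p)},\qquad \theta(p)=\frac{1/2-1/p}{1/2-1/q}=\frac{q(p-2)}{p(q-2)},\qquad 2\leq p\leq q .
\ee
The semigroup property then transports this estimate to later times: writing $T_t=T_{t_q}\circ T_{t-t_q}$ for $t\geq t_q$ and feeding in the gap-induced $\bL_2$ decay of the fluctuating part $f-\tr{\sigma f}\1$, one arrives at a bound $\|T_t\|_{2\raw p(t),\sigma}\leq 1$ holding along a curve $t\mapsto p(t)$. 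Re-parametrising that curve as $p(t)=1+e^{2\alpha t}$ and solving for the largest admissible $\alpha$ is then a matter of collecting exponents: the weight $\theta$ contributes the factor $1-2/q$ in the numerator, while the denominator accumulates $\lambda t_q$ (the time already spent to realise the $q$-bound), $\log M_q$ (its amplitude), and a residual $(q-2)/q$ from the normalisation of $\theta$, producing $\alpha=\tfrac{(1-2/q)\lambda}{2(\lambda t_q+\log M_q+(q-2)/q)}$. An application of Theorem~\ref{thm:hypervsLS}(1) with $\beta=\alpha$ finishes the argument.

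The main obstacle is the nonzero mean carried by positive observables. The spectral gap controls only the component orthogonal to $\1$, whereas the hypercontractive inequality is asserted for $f\in\cA_d^+$; peeling off the constant part naively costs an additive $\|\1\|_{p,\sigma}=1$ that would destroy a clean $\leq 1$ bound. The remedy is to read the $q$-bound as yielding only a \emph{defective} hypercontractive (equivalently, defective log-Sobolev) estimate, and then to absorb the defect using the Poincar\'e inequality above, in the spirit of the Rothaus / Bakry--\'Emery tightening; it is exactly this tightening, together with careful tracking of the interpolation exponents, that produces the explicit constant rather than a merely qualitative one. Checking that the bookkeeping collapses to precisely $1-2/q$ and $(q-2)/q$, and that no dimension-dependent factor sneaks in through the mean, is the delicate part, but each individual ingredient --- spectral gap, Stein--Weiss interpolation, semigroup composition --- is standard.
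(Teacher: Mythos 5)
There is a genuine gap, and it sits in the central interpolation step. You interpolate between the $2\raw2$ contraction and the $2\raw q$ bound \emph{at the fixed time} $t_q$, obtaining $\|T_{t_q}\|_{2\raw p,\sigma}\leq M_q^{\theta(p)}$ (this is a Riesz--Thorin--type statement in the exponent, not what the paper uses Stein--Weiss for). That gives you information only at $t=t_q$, and your semigroup composition $T_t=T_{t_q}\circ T_{t-t_q}$ only propagates it to $t\geq t_q$. But every route to a log-Sobolev inequality here needs control near $t=0$: Theorem~\ref{thm:hypervsLS}(1), which you plan to invoke, requires $\|T_t(f)\|_{p(t),\sigma}\leq\|f\|_{2,\sigma}$ for \emph{all} $t>0$ up to $p(t)=1+e^{2\beta t}$, and the defective log-Sobolev inequality you correctly identify as the real target is obtained by differentiating a hypercontractive bound at $t=0$. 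Your construction supplies neither; "collecting exponents" cannot be carried out because there is no curve $t\mapsto p(t)$ with $p(t)>2$ along which you have any estimate for $0<t<t_q$. (Trying to prove hypercontractivity at small times from large-time hypercontractivity plus a gap is essentially the statement of the theorem itself, so that detour is circular.)

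The missing ingredient is interpolation in \emph{complex time}: the paper applies the Stein--Weiss theorem to the analytic family $K_z=T_{zt_q}$, using reversibility to get $\|K_{ia}\|_{2\raw2,\sigma}\leq1$ on the imaginary axis and $\|K_{1+ia}\|_{2\raw q,\sigma}\leq M_q$ on the line $\Real z=1$. This yields $\|T_t\|_{2\raw p(t),\sigma}\leq M_q^{t/t_q}$ with $1/p(t)=(t/t_q)/q+(1-t/t_q)/2$ for all $t\in[0,t_q]$, which can be differentiated at $t=0$ (via the Olkiewicz--Zegarlinski formula for $\tfrac{d}{dt}\|T_t f\|_{p(t),\sigma}$) to produce the defective inequality $\tfrac{q-2}{2qt_q}\Ent(f)\leq\cE(f)+\tfrac{1}{t_q}\log(M_q)\|f\|_{2,\sigma}^2$. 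From that point on your proposal is on track: the Rothaus-type bound $\Ent(f)\leq\Ent(|\tilde f|_2)+2\|\tilde f\|_{2,\sigma}^2$ together with the Poincar\'e inequality $\lambda\|\tilde f\|_{2,\sigma}^2\leq\cE(f)$ removes the defect and yields exactly the claimed constant, and the paper concludes directly with the log-Sobolev inequality rather than going back through Theorem~\ref{thm:hypervsLS}(1). So the endgame is right, but the proof as proposed cannot reach it.
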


\proof{ The proof follows very closely the analogous statement for classical Markov chains (Ref. \cite{diaconis1996logarithmic} Theorem 3.9). We want to apply the Stein-Weiss interpolation theorem to the semigroup $T_t$. For that, we define the complex time semigroup 
\be T_z = e^{z \cL} = \sum_{n=0}^\infty \frac{z^n}{n!} \cL^n,\ee
which defines an analytic family of operators by construction. 
Now, define the complex semigroup $K_z:=T_{z t_q}$. Because $T_t$ is reversible, we get that for any positive real $a>0$, 
\be || K_{ia}||_{2\rightarrow 2,\sigma}\leq 1,\label{eqn:Kia}\ee
since the spectral radius of a Hermitian semigroup cannot change upon the replacement $x\mapsto i x$. Furthermore, by Eqn. (\ref{eqn:Kia}), and contractivity of the semigroup $T_z$, we get 
\bea || K_{1+i a}||_{2\rightarrow q,\sigma}&=&|| K_{i a}\circ K_{1}||_{2\rightarrow q,\sigma}\\
&\leq& ||K_{1}||_{2\rightarrow q,\sigma} \leq M_q\eea
Hence, we are in a position to apply the Stein-Weiss interpolation theorem, which for all $0\leq s\leq 1$ guarantees
\be ||K_s||_{2\rightarrow p_s}\leq M_q^s,\ee
for 
\be \frac{1}{p_s}=\frac{s}{q}+\frac{1-s}{2}\ee 
Now converting this expression back to our original setting of real semigroups by identifying $t=st_q$, we get 

\be ||T_t||_{2\rightarrow p(t),\sigma}\leq e^{\frac{t}{t_q}\log(M_q)},\label{eqn:RThyper1}\ee
where 
\be p(t)= \frac{2qt_q}{(2-q)t+qt_q}\label{eqn:RThyper2}\ee

Hence for any $f\in\cA_d$, we get 
\be e^{-\frac{t}{t_q}\log(M_q)}||T_t(f)||_{p(t),\sigma} \leq ||f||_{2,\sigma}\label{eqn:interpolation1}\ee
Taking the derivative at $t=0$ on both sides yields
\be -\frac{\log(M_q)}{t_q}||f||_{2,\sigma} + \frac{d}{dt} \left.||T_t(f)||_{p(t),\sigma}\right|_{t=0} \leq 0\ee

The second term can be shown to yield (see Lemma 3.7 in Ref. \cite{olkiewicz1999hypercontractivity}):
\bea \frac{d}{dt} \left.||T_t(f)||_{p(t),\sigma}\right|_{t=0}&=& ||f||_{2,\sigma}^{-1}\left(\frac{\dot{p}(0)}{p(0)^2} \Ent(f)-\cE(f)\right)\\
&=& ||f||_{2,\sigma}^{-1}\left(\frac{q-2}{2 q t_q} \Ent(f)-\cE(f)\right)\eea

Then, we can rewire Eqn. (\ref{eqn:interpolation1}) as 
\be \frac{q-2}{2 q t_q} \Ent(f) \leq \cE(f) + \frac{1}{t_q} \log(M_q) ||f||_{2,\sigma}^2\label{eqn:interpolation2}\ee

Now, in the proof of Theorem 4.2 in Ref. \cite{olkiewicz1999hypercontractivity} (page 276), it was shown that   
\be \Ent(f)\leq \Ent(|\tilde{f}|_2) + 2||\tilde{f}||_{2, \sigma}^2,\ee
where $|\tilde{f}|_2=\sigma^{-1/4}(|\sigma^{1/4}(f-\tr{\sigma f})\sigma^{1/4}|\sigma^{-1/4}$, and $\tilde{f}=f-\tr{\sigma f}$.

Applying Eqn. (\ref{eqn:interpolation2}) to $|\tilde{f}|_2$, observing that $|||\tilde{f}|_2||_{2,\sigma}=||\tilde{f}||_{2,\sigma}$, and recalling that $\lambda ||f-\tr{\sigma f}||^2_{2, \sigma}\leq \cE(f)$, we obtain 

\bea
\Ent(f)&\leq& \frac{2 q t_q}{q-2}\cE(|\tilde{f}|_2)+\left(2+\frac{2 q t_q}{q-2}\right)||\tilde{f}||^2_{2,\sigma}\\
&\leq& \frac{2 q t_q}{q-2}\cE(|\tilde{f}|_2)+\frac{2}{\lambda}\left(1+\frac{q t_q}{q-2}\right)\cE(f)\\
&\leq& \left(\frac{2q}{q-2}\left(t_q+\frac{\log(M_q)}{\lambda}\right)+\frac{2}{\lambda}\right)\cE(f).\eea

In the last line, we used that $\cE(|\tilde{f}|_2)\leq\cE(\tilde{f})$, which can be seen to be true whenever $f\in\cA_d$ (page 276 in Ref. \cite{olkiewicz1999hypercontractivity}). Rearranging yields the desired lower bound on the log-Sobolev constant.\qed}

Given that we have a bound on the spectral gap of the semigroup, we only need to find a suitable norm bound on one fixed $2 \raw q$ norm, say for convenience $\|T_{t_4}\|_{2 \raw 4}  \leq M_4$, in order to derive the lower bound on $\alpha$. For the special case when $q=4$, the bound yields

\be
\label{from24} \alpha \geq \frac{\lambda}{2(2\lambda t_4+2\log(M_4) +1)}.
\ee 

This theorem already permits a general lower bound to the log-Sobolev constant for any gapped semigroup that has a full-rank fixed point:

\begin{corollary}
Let $\cL:\cM_d\rightarrow\cM_d$ be a {\it primitive} and {\it reversible} Liouvillian with spectral gap $\lambda$ and full rank fixed point $\sigma$, then the log-Sobolev constant is always bounded 
by
\be
	\frac{\lambda}{\log(\|\sigma^{-1}\|) +2} \leq \alpha \leq \lambda
\ee  
\end{corollary}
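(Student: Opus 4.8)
The statement packages two separate facts: the upper bound $\alpha\le\lambda$ is the classical observation that a log-Sobolev constant can never exceed the spectral gap, while the lower bound is essentially a direct corollary of Theorem~\ref{thm:interpolation} once one feeds it a crude off-the-shelf $2\to q$ norm estimate for the semigroup. I would treat the two halves independently.

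For the lower bound, the plan is to invoke Theorem~\ref{thm:interpolation} with $q=\infty$ (the choice $q=4$ together with Eqn.~(\ref{from24}) works identically). The only input required is a bound on $\|T_{t_\infty}\|_{2\to\infty,\sigma}$ for some $t_\infty>0$. Since $T_t$ is unital and completely positive it is an operator-norm contraction, $\|T_t(f)\|_\infty\le\|f\|_\infty$, and in the limit $p\to\infty$ the weighted norm $\|\cdot\|_{\infty,\sigma}$ reduces to the ordinary operator norm; moreover, writing $f$ in the eigenbasis of $\sigma$ with eigenvalues $p_k$ one has $\|f\|_{2,\sigma}^2=\sum_{j,k}\sqrt{p_jp_k}\,|f_{jk}|^2\ge(\min_k p_k)\sum_{j,k}|f_{jk}|^2\ge\|\sigma^{-1}\|^{-1}\|f\|_\infty^2$. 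Hence $\|T_t\|_{2\to\infty,\sigma}\le\sqrt{\|\sigma^{-1}\|}$ for \emph{every} $t>0$, so one may take $M_\infty=\sqrt{\|\sigma^{-1}\|}$ with $t_\infty$ arbitrarily small. Substituting $q=\infty$ and this $M_\infty$ into Theorem~\ref{thm:interpolation} and letting $t_\infty\downarrow0$ gives $\alpha\ge\lambda/\big(2(\tfrac12\log\|\sigma^{-1}\|+1)\big)=\lambda/(\log\|\sigma^{-1}\|+2)$, as claimed.

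For the upper bound, I would linearize the log-Sobolev inequality around the fixed point $\1$. Set $f_\epsilon=\1+\epsilon g$ with $g=g^\dagger$ and, without loss of generality, $\tr{\sigma g}=0$ (adding a multiple of $\1$ changes neither $\cE$ nor $\Var$); for $\epsilon$ small $f_\epsilon\in\cA_d^+$, so Eqn.~(\ref{Eqn:LSI}) applies. Using $\cL(\1)=0$ together with $\avr{\1,\cL(g)}_\sigma=\avr{\cL(\1),g}_\sigma=0$ one gets $\cE(f_\epsilon)=\epsilon^2\cE(g)$ exactly. For the entropy one writes $\Ent(f)=\tfrac12(\tr{\rho})\,D\big(\rho/\tr{\rho}\,\big\|\,\sigma\big)$ with $\rho:=(\sigma^{1/4}f\sigma^{1/4})^2$; expanding $\rho_\epsilon=(\sigma^{1/2}+\epsilon B)^2$ with $B:=\sigma^{1/4}g\sigma^{1/4}$ and using the Kubo--Mori second-order expansion of relative entropy yields $\Ent(f_\epsilon)=\tfrac{\epsilon^2}{4}\sum_{j,k}(\sqrt{p_j}+\sqrt{p_k})^2\sqrt{p_jp_k}\,\tfrac{\log p_j-\log p_k}{p_j-p_k}\,|g_{jk}|^2+O(\epsilon^3)$. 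Since the logarithmic mean is never larger than the arithmetic mean, $\tfrac14(\sqrt{p_j}+\sqrt{p_k})^2\tfrac{\log p_j-\log p_k}{p_j-p_k}\ge1$, so this quadratic form dominates $\Var(g)=\sum_{j,k}\sqrt{p_jp_k}|g_{jk}|^2$. Dividing the LSI by $\epsilon^2$ and sending $\epsilon\to0$ then gives $\alpha\Var(g)\le\cE(g)$ for every Hermitian $g$, and taking the infimum over $g$ with $\Var(g)>0$ yields exactly $\alpha\le\lambda$.

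The lower bound is routine once Theorem~\ref{thm:interpolation} is available; the delicate part is the upper bound, specifically identifying the quadratic part of the $\bL_2$-entropy $\Ent$ and verifying that the Kubo--Mori-type quadratic form it produces dominates the $\bL_2(\sigma)$ variance defining the gap — the step that ultimately reduces to the log-mean/arithmetic-mean inequality. If one is content to cite that the log-Sobolev constant is always bounded by the spectral gap (see e.g. Ref.~\cite{kastoryano2013quantum}), this half becomes immediate and the corollary follows from the single norm estimate above.
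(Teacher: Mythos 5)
Your proposal is correct and follows essentially the same route as the paper: the lower bound is obtained by feeding the interpolation theorem a trivial norm estimate valid already at $t=0$ (the paper uses $q=4$ with $M_4=\|\sigma^{-1}\|^{1/4}$ via Eqn.~(\ref{from24}), your $q=\infty$ with $M_\infty=\|\sigma^{-1}\|^{1/2}$ gives the identical constant $\lambda/(\log\|\sigma^{-1}\|+2)$), and the upper bound $\alpha\leq\lambda$ is the standard gap comparison. The only difference is that you supply the linearization argument for $\alpha\leq\lambda$ explicitly, whereas the paper simply cites Refs.~\cite{kastoryano2013quantum,olkiewicz1999hypercontractivity}; your sketch of that argument (Kubo--Mori expansion plus the logarithmic-mean inequality) is sound.
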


\proof{ This result follows from the $\bL_p$ - norm bound $\|f\|_{4,\sigma} \leq \|\sigma^{-1}\|^{1/4}\|f\|_{2,\sigma}$, and the fact that $T_t$ is contractive. At $t_4=0$ we have $M_4 =  \|\sigma^{-1}\|^{1/4}$.  The upper bound follows from a general bound on $\alpha$ in terms of the spectral gap  \cite{kastoryano2013quantum,olkiewicz1999hypercontractivity}.\qed}

This lower bound tells us that every primitive semigroup on a finite dimensional state space is in fact Hypercontractive. However, it is of little practical use in that it does not improve the mixing time bound obtained from the spectral-gap bound alone. We see that the lower bound to $\alpha$ is now dependent on the smallest eigenvalue of $\sigma$ thus defeating the exponential improvement in the pre-factor of the bound Eqn. (\ref{LSmixing}). This improvement becomes relevant, when one can find a lower bound on $\alpha$ which is of the same order as $\lambda$ and does not depend on the system size. \\

\textbf{Note}: If the semigroup is $\bL_p$ regular, then Theorem \ref{thm:interpolation} together with Theorem \ref{thm:hypervsLS} imply that if there exist positive constants $t_q,M_q>0$ such that $||T_{t_q}||_{2\rightarrow q,\sigma}\leq M_q$, then the semigroup is Hypercontractive for any $t>0$. It is worth noting that the $\bL_p$ regularity  assumption can be dropped by invoking a further interpolation theorem. Indeed, as discussed in Ref. \cite{gross1993logarithmic}, starting from Eqns. (\ref{eqn:RThyper1}) and (\ref{eqn:RThyper2}), we can use the Riesz-Thorin interpolation theorem \cite{Beigi2013sandwiched,delgosha2013impossibility},  and bound $p(t)$ appropriately, to show that the semigroup must be Hypercontractive for any $t>0$. 

\subsection{Invariant blocks and bounds on $\bL_p$ norms}\label{sec:strategy}

We now present a general method which allows to prove hypercontractivity for a certain class of gapped semigroups. This approach was first pioneered in Ref. \cite{simon1972hypercontractive} and has been extended to non-commutative $\bL_p$ - spaces in Ref. \cite{bodineau2000hypercontractivity}. 

\begin{lemma} \label{ziggybound}
Let $T_t = \exp(t\cL)$  denote a  primitive  semigroup with fixed point $\sigma$. Suppose that the following conditions are satisfied:
\begin{enumerate}
	\item The matrix space $\cB = \cM_d$ has the following block decomposition
			\be \cB = \bigoplus_{n=0}^N \cB_n \ee 
			where each block $\cB_n$ is invariant under the action of $T_t$, and $\cB_0 = \mbox{span}\{\1_d\}$.
	
	\item The spectrum of $\cL$ restricted to the block $\cB_n$ is contained in the interval $(-\infty, -\lambda n]$, where $\lambda$ is a constant that  is independent of $n$.
	
	\item For all $f_n \in \cB_n$, we have a norm bound of the form
		\be \|f_n\|_{4,\sigma} \leq C^n \|f_n\|_{2,\sigma} \label{hyper3}\ee
		where $C$ is a positive finite constant.
\end{enumerate}
Then, whenever $t > \lambda^{-1}\ln(C)$ the following norm bound holds
\be
	\|T_{t}\|_{2 \raw 4,\sigma} \leq M_{4}  \Sp \mbox{with} \Sp M_{4} = \frac{1}{1 - Ce^{-\lambda {t}}}   
\ee

\end{lemma}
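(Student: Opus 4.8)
The plan is to decompose an arbitrary observable $f \in \cA_d$ along the invariant blocks, $f = \sum_{n=0}^N f_n$ with $f_n \in \cB_n$, and to bound $\|T_t(f)\|_{4,\sigma}$ block by block using the triangle inequality for the $\bL_4$-norm. First I would write $\|T_t(f)\|_{4,\sigma} \le \sum_{n=0}^N \|T_t(f_n)\|_{4,\sigma}$, which is legitimate because each $\cB_n$ is $T_t$-invariant by condition (1), so $T_t(f) = \sum_n T_t(f_n)$ with $T_t(f_n) \in \cB_n$. For each block I then apply condition (3) to trade the $\bL_4$-norm for the $\bL_2$-norm at the cost of a factor $C^n$: $\|T_t(f_n)\|_{4,\sigma} \le C^n \|T_t(f_n)\|_{2,\sigma}$.

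Next I would use condition (2): since $\cL$ restricted to $\cB_n$ has spectrum in $(-\infty,-\lambda n]$ and is self-adjoint with respect to $\avr{\cdot,\cdot}_\sigma$ (reversibility is implicit in the block/spectral structure being stated this way), the operator norm of $T_t|_{\cB_n} = e^{t\cL|_{\cB_n}}$ as a map on $(\cB_n, \|\cdot\|_{2,\sigma})$ is at most $e^{-\lambda n t}$. Hence $\|T_t(f_n)\|_{2,\sigma} \le e^{-\lambda n t}\|f_n\|_{2,\sigma}$. Combining, $\|T_t(f)\|_{4,\sigma} \le \sum_{n=0}^N (Ce^{-\lambda t})^n \|f_n\|_{2,\sigma}$. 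Since the blocks are mutually orthogonal in the $\bL_2(\sigma)$ inner product, $\|f_n\|_{2,\sigma} \le \|f\|_{2,\sigma}$ for every $n$, so $\|T_t(f)\|_{4,\sigma} \le \|f\|_{2,\sigma}\sum_{n=0}^N (Ce^{-\lambda t})^n$. When $t > \lambda^{-1}\ln C$ the ratio $r := Ce^{-\lambda t}$ satisfies $0 < r < 1$, so the finite geometric sum is bounded by $\sum_{n=0}^\infty r^n = (1-Ce^{-\lambda t})^{-1} = M_4$, giving $\|T_t(f)\|_{4,\sigma} \le M_4 \|f\|_{2,\sigma}$, which is the claimed bound on $\|T_t\|_{2\to 4,\sigma}$.

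The step I expect to be the main obstacle — or at least the one requiring care — is the orthogonality claim $\|f_n\|_{2,\sigma} \le \|f\|_{2,\sigma}$. The block decomposition $\cB = \bigoplus_n \cB_n$ is stated as a direct sum, but to conclude that the $\bL_2(\sigma)$-norm is the quadratic sum of the block norms (hence that each block norm is dominated by the total) one needs the blocks to be mutually $\avr{\cdot,\cdot}_\sigma$-orthogonal. This should follow from reversibility: distinct blocks carry disjoint portions of the real spectrum of the self-adjoint operator $\cL$ on the Hilbert space $(\cM_d, \avr{\cdot,\cdot}_\sigma)$, and eigenspaces of a self-adjoint operator for distinct eigenvalues are orthogonal; since each $\cB_n$ is spanned by such eigenvectors (condition (2) constrains the spectrum on each block), distinct $\cB_n$ are orthogonal. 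One should also check the statement is meant for general $f \in \cA_d$ rather than just $f \in \cA_d^+$; the argument above works verbatim for any $f$, and restricting to positive $f$ (as in the hypercontractivity definition) only makes it easier. The remaining steps — the $\bL_4$ triangle inequality, the spectral bound on the semigroup norm, and summing the geometric series — are routine.
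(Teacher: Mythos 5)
Your proposal is correct and follows essentially the same route as the paper's own proof: block decomposition, $\bL_4$ triangle inequality, conditions (2) and (3) applied blockwise, orthogonality of the blocks to get $\|f_n\|_{2,\sigma}\leq\|f\|_{2,\sigma}$, and the geometric series bound. Your added justification of the orthogonality step via reversibility is a point the paper simply asserts ("supported on disjoint blocks and are thereby orthogonal"), so you have if anything been slightly more careful.
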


\proof{ We can always decompose $f$ as $f=\sum_n f_n$, where $f_n\in\cB_n$. Then, from the triangle inequality for  $\bL_p(\sigma)$ - norms, we get $ \|T_t f\|_{4,\sigma} \leq \sum_{n=0}^N \|T_t f_n\|_{4,\sigma} $.  Now, applying conditions 2. and 3., we obtain the norm bound
 \be
 	\|T_t f\|_{4,\sigma} \leq \sum_{n=0}^N C^n \|T_t f_n\|_{2,\sigma} \leq \sum_{n=0}^N C^n e^{-n\lambda t} \| f_n\|_{2,\sigma}
\ee
Moreover, since the $f_n$ are supported on disjoint blocks and are thereby orthogonal, we get that $\|f_n\|_{2,\sigma} \leq \|f\|_{2,\sigma}$. The proof is completed by noting that for $t > \lambda^{-1}\ln(C)$ the sum $\sum_{k=0}^N C^n e^{-n\lambda  t} \leq (1 - Ce^{-\lambda t})^{-1}$ constitutes a geometric series for which  
\be
	\|T_t f\|_{4,\sigma} \leq \frac{1}{1 - Ce^{-\lambda t}} \|f\|_{2,\sigma}.
\ee
\qed}

The general strategy to prove bounds on the log-Sobolev constant $\alpha$ is now the following. 
We first find an invariant block decomposition for the generator $\cL$,  that furthermore has a restriction on the spectrum. Then, we show that a norm bound for all elements in the block holds, that is of the form as stated in the Lemma. From these three conditions we obtain the norm bound $\| T_{t_4} \|_{2\raw4,\sigma} \leq \left(1 - C^{-\lambda t_4}\right)^{-1} = M_4$.  We then invoke Theorem \ref{thm:interpolation}, to obtain a lower bound to $\alpha$ and infer the Hypercontractive bound. 

We can now estimate the best constant in terms of $C$ and $\lambda$, by choosing $t_4 = \log(2C)$ so that $M_4 = 2$. If we plug this into the bound of Eqn. (\ref{from24}) we obtain

\begin{corollary}\label{main-corr}
Let $T_t:\cM_d\rightarrow\cM_d$ be a reversible semigroup that satisfies the conditions of Lemma \ref{ziggybound}, and let $C$ and $\lambda$ denote the corresponding constants. Then
the log Sobolev constant is bounded by
\be	
	\alpha \geq \frac{\lambda}{\log\left(C^42^8 e^2\right)}.
\ee
\end{corollary}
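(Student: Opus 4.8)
The plan is to chain together the two results immediately preceding the corollary. First, I would invoke Lemma~\ref{ziggybound}: since $T_t$ satisfies its three hypotheses with constants $C$ and $\lambda$, we know that for every $t>\lambda^{-1}\ln(C)$ the norm bound $\|T_t\|_{2\raw 4,\sigma}\leq (1-Ce^{-\lambda t})^{-1}$ holds. The freedom here is in the choice of time $t_4$; the cleanest choice is the one already suggested in the text, namely $t_4 = \lambda^{-1}\log(2C)$, which makes $Ce^{-\lambda t_4} = 1/2$ and hence $M_4 = (1-\tfrac12)^{-1} = 2$. (One should check $t_4>\lambda^{-1}\ln C$, which holds since $\log(2C)>\log C$.)

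Next I would feed the pair $(t_4,M_4) = (\lambda^{-1}\log(2C),\,2)$ into the $q=4$ specialization of the interpolation theorem, Eqn.~(\ref{from24}), which states $\alpha \geq \lambda/(2(2\lambda t_4 + 2\log(M_4)+1))$. Substituting $\lambda t_4 = \log(2C) = \log 2 + \log C$ and $\log M_4 = \log 2$ gives
\be
\alpha \geq \frac{\lambda}{2\big(2\log 2 + 2\log C + 2\log 2 + 1\big)} = \frac{\lambda}{8\log 2 + 4\log C + 2}.
\ee
The final step is purely cosmetic: rewrite the denominator as a single logarithm. Since $8\log 2 + 4\log C + 2 = \log(2^8) + \log(C^4) + \log(e^2) = \log(C^4 2^8 e^2)$, we obtain exactly $\alpha \geq \lambda/\log(C^4 2^8 e^2)$, as claimed.

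There is essentially no obstacle here — the corollary is a bookkeeping consequence of Lemma~\ref{ziggybound} and Theorem~\ref{thm:interpolation}, both of which are assumed. The only points requiring a moment's care are: verifying the reversibility hypothesis of Theorem~\ref{thm:interpolation} is met (it is, since the corollary explicitly assumes $T_t$ is reversible, and primitivity plus a spectral gap $\lambda>0$ come from conditions~1 and~2 of Lemma~\ref{ziggybound}, with $\lambda$ the gap because $\cB_0$ is the one-dimensional kernel and all other blocks $\cB_n$ have spectrum in $(-\infty,-\lambda n]\subseteq(-\infty,-\lambda]$); confirming the admissibility condition $t_4 > \lambda^{-1}\ln(C)$ for the chosen $t_4$; and tracking the arithmetic in converting the denominator to the closed logarithmic form. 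None of these is a genuine difficulty, so the proof can be stated in a few lines.
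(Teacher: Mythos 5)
Your proposal is correct and follows exactly the route the paper takes: choose $t_4=\lambda^{-1}\log(2C)$ so that $M_4=2$, substitute into the $q=4$ bound of Eqn.~(\ref{from24}), and collect the denominator into $\log(C^4 2^8 e^2)$. The paper even fixes the same $t_4$ (writing $t_4=\log(2C)$ with the factor $\lambda^{-1}$ implicit), so there is nothing to add beyond your correct bookkeeping.
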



\subsection{Hypercontractivity for product channels}

In this section, we consider the case of a product of $N$ primitive semigroups. A proof of the analogous classical problem was obtained in Ref. \cite{bodineau2000hypercontractivity}. There, a special class of quantum systems was considered as well. Below we extend the proof to general reversible primitive Liouvillians. A celebrated classical result in the theory of log-Sobolev inequalities, is that the log-Sobolev constant of a tensor product of stochastic semigroups is equal to the minimal log-Sobolev constant of its constituents. This result is known as the \textit{product property} for the log-Sobolev constant. With the exception of a set of very specific channels \cite{king2012hypercontractivity,king2013multiplicativity, montanaro2008quantum,kastoryano2013quantum}, the product property has not been shown for the quantum log-Sobolev constant, and based on similar results on the non-multiplicativity of $p$ norms \cite{hastings2009superadditivity,hayden2008counterexamples,werner2002counterexample}, it is expected that this property does not hold in general. However, the theorem below gives strict bounds on how much the product property can be violated. 

\begin{theorem}\label{thm:LSproduct}
Let $\cL_k:\cM_d\rightarrow\cM_d$ be primitive reversible Liouvillians with respective stationary states $\sigma_k$ and spectral gaps $\Lambda_k$. 
Define the product Liouvillian $\cL:\cM_{d^N}\rightarrow\cM_{d^N}$ as
\be
\cL \equiv \sum_{k=1}^N \cL_k 
\ee
where by abuse of notation, we have lifted each $\cL_k$ such that it is acting non-trivially only on the $k$'th subsystem. 
The fixed point of $\cL$ is given by $\sigma = \otimes_k \sigma_k$.  
Then the log-Sobolev constant $\alpha$ of $\cL$ is  bounded as
\be 
\frac{\Lambda}{\log(d^4 s)+11} \leq \alpha \leq \Lambda, 
\ee
where $\Lambda = \min_k \Lambda_k$ and $s = \max_k \Vert \sigma_k^{-1} \Vert$.
\end{theorem}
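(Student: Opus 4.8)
The plan is to prove the theorem as a direct application of Corollary~\ref{main-corr}, so the entire task reduces to exhibiting, for the product Liouvillian $\cL = \sum_k \cL_k$, an invariant block decomposition of $\cM_{d^N}$ satisfying the three hypotheses of Lemma~\ref{ziggybound} with explicit constants $C$ and $\lambda$. The upper bound $\alpha \le \Lambda$ is immediate from the general upper bound on the log-Sobolev constant in terms of the spectral gap (cited in the proof of the first corollary), together with the fact that the spectral gap of $\cL$ equals $\Lambda = \min_k \Lambda_k$ since the spectra add under the tensor-sum construction; so the work is all in the lower bound.

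First I would set up the block decomposition. For each factor $k$, write $\cM_d = \bC\1 \oplus \cV_k$ where $\cV_k$ is the orthogonal complement of the identity in the $\bL_2(\sigma_k)$ inner product; this is $\cL_k$-invariant because $\cL_k$ is reversible and primitive (the identity is the unique fixed point, and reversibility makes $\cL_k$ self-adjoint, so its orthogonal complement is invariant). Tensoring, $\cM_{d^N} = \bigotimes_k (\bC\1 \oplus \cV_k)$, which I decompose by ``degree'': for a subset $S \subseteq \{1,\dots,N\}$ let $\cB_S = \bigotimes_{k\in S}\cV_k \otimes \bigotimes_{k\notin S}\bC\1$, and then set $\cB_n = \bigoplus_{|S|=n}\cB_S$. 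Each $\cB_S$ is invariant under every $\cL_k$ (acting as $\cL_k$ on the $k$-th tensor leg, zero if $k\notin S$ since it kills $\1$), hence invariant under $\cL$, so condition~1 holds with $\cB_0 = \bC\1_{d^N}$. For condition~2: on $\cB_S$, $\cL$ acts as $\sum_{k\in S}\cL_k$ with the other legs spectators; the spectrum of $\cL_k|_{\cV_k}$ lies in $(-\infty,-\Lambda_k] \subseteq (-\infty,-\Lambda]$, and since the eigenvalues add, the spectrum of $\cL|_{\cB_S}$ lies in $(-\infty, -|S|\Lambda]$; thus condition~2 holds with $\lambda = \Lambda$.

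The genuine content is condition~3: the $2\to4$ norm bound $\|f_n\|_{4,\sigma}\le C^n\|f_n\|_{2,\sigma}$ for $f_n\in\cB_n$. The key input is the crude single-site estimate $\|g\|_{4,\sigma_k}\le \|\sigma_k^{-1}\|^{1/4}\|g\|_{2,\sigma_k}$ (used already in the first corollary's proof), which for $g\in\cV_k$ also holds with the same constant. The nontrivial point is how this behaves under tensor products for elements living on a degree-$n$ block: I expect to invoke a multiplicativity/Hölder-type property of the noncommutative $\bL_p(\sigma)$ norms under tensor products of the reference state — for $f = \bigotimes_k f^{(k)}$ with $\sigma = \bigotimes_k\sigma_k$ one has $\|f\|_{p,\sigma} = \prod_k\|f^{(k)}\|_{p,\sigma_k}$ — so on $\cB_S$ the ratio $\|\cdot\|_{4,\sigma}/\|\cdot\|_{2,\sigma}$ is controlled by $\prod_{k\in S}\|\sigma_k^{-1}\|^{1/4}\le s^{|S|/4} = s^{n/4}$, giving $C = s^{1/4}$. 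The subtlety — and I expect this to be the main obstacle — is that a general $f_n\in\cB_n$ is a \emph{sum} over subsets $S$ with $|S|=n$ of elements that are themselves sums of product operators, not a single product, so I cannot naively multiply norms; I will need to argue that the worst case over the block $\cB_n$ is still governed by the constant $s^{n/4}$, presumably by a convexity/interpolation argument on the operator $\sigma^{1/8}\,\cdot\,\sigma^{1/8}$ restricted to the span of the $\cV_k$'s, or by appealing to the $p\to q$ norm behavior of conditional-expectation-type projections onto the blocks. Once condition~3 is established with $C = s^{1/4}$ and $\lambda = \Lambda$, Corollary~\ref{main-corr} yields $\alpha \ge \Lambda/\log(C^4 2^8 e^2) = \Lambda/\log(s\cdot 2^8 e^2)$, and since $s \ge d$ forces the presence of the $d^4$ only if we instead track the block bound as $C^n \le (d^{1/4}s^{\text{something}})$... more precisely, absorbing the constants $\log(2^8 e^2) = 8\log 2 + 2 < 11 - \log(d^4/d^4)$, one massages $\log(C^4 2^8 e^2)$ into the stated form $\log(d^4 s) + 11$; I would double-check the exact bookkeeping here, since the appearance of $d^4$ rather than just $s$ suggests the intended single-block constant is $C = (d s)^{1/4}$ or that a factor of $d$ is picked up in relating $\|f_n\|_{4,\sigma}$ across the $\binom{N}{n}$ subsets, and I would reconcile that discrepancy before finalizing.
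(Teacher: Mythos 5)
Your setup (the degree-$n$ block decomposition built from the non-identity eigenspaces $\cV_k$, the invariance of each block, and the spectral bound $-n\Lambda$) matches the paper exactly, and your identification of condition~3 as the real content is correct. But that is precisely where your proposal stops short: you do not actually prove the block norm bound, you only gesture at "a convexity/interpolation argument" or "conditional-expectation-type projections," and neither of those, as stated, yields an $N$-independent constant. The difficulty you correctly flag --- that $f_n\in\cB_n$ is a sum of many non-orthogonal-in-$\bL_4$ product terms --- cannot be resolved by tensor multiplicativity of the $\bL_p(\sigma)$ norms alone; a naive triangle inequality over the $\binom{N}{n}(d^2-1)^n$ eigenvectors in the block produces a constant that grows with $N$, which destroys the theorem.

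The missing idea (from Bodineau--Zegarlinski, and the heart of the paper's proof) is to expand $\|f_n\|_{4,\sigma}^4$ as a quartic form $Q(\Phi^{K^{(1)}}_{X^{(1)}},\ldots,\Phi^{K^{(4)}}_{X^{(4)}})$ over quadruples of eigenvectors and to exploit a \emph{vanishing property}: $Q=0$ unless every site in $\cup_i X^{(i)}$ is occupied by at least two of the four supports, because a singly-occupied site contributes a factor $\avr{\Phi_{K(k),k},\Phi_{0,k}}_{\sigma_k}=0$ by orthogonality to the identity. This restricts the sum to configurations parametrized by the pairwise intersections $X^{(ij)}$, and an iterated Cauchy--Schwarz over those intersections (plus one more over the eigenvector labels $K^{(i)}$) bounds the constrained sum by $2^{4n}d^{4n}\|f_n\|_{2,\sigma}^4$. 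Combined with $|Q|\le s^n$ this gives $C=2\,d\,s^{1/4}$ --- not your guessed $C=s^{1/4}$ --- and plugging $C^4=2^4 d^4 s$ into Corollary~\ref{main-corr} is exactly what produces the $\log(d^4 s)+11$ in the statement. The factor $2d$ per excitation is the combinatorial price of this argument, and the discrepancy you noticed at the end of your proposal is the symptom of the missing lemma rather than a bookkeeping issue.
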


\bigskip
Observe, that this bound on the log-Sobolev constant of the product semigroup has no dependence on $N$. 
\bigskip

\proof{ The proof will follow very closely the analogous classical one in Ref. \cite{bodineau2000hypercontractivity} (Theorem 3.1). Let $\{\Phi_{i,k}\}_{i=0,\ldots,d^2-1}$ be the eigenvectors of $\cL_k$, with $\Phi_{0,k} = \1$ the eigenvector corresponding to the stationary state, and let $\{\lambda_{i,k}\}$ be its spectrum; i.e. $\cL_k(\Phi_{i,k}) = \lambda_{i,k} \Phi_{i,k}$. 
It is always assured that such a spectral decomposition exists with real non-positive $\lambda_{i,k}$ since we consider only reversible maps. 
The right eigenvectors of $\cL$  are now given by
\be
 \cL \left (\bigotimes_k \Phi_{i_k,k} \right) =  \sum_k \lambda_{i_k,k}  \left( \bigotimes_k \Phi_{i_k,k}\right).
\ee
We now define the sets $X \subset \{1,\ldots,N\}$ and tuples $K \in [1,d^2-1]^{|X|} $ which given $X$, is isomorphic to the set of functions $X \rightarrow [1,d^2-1]$. 
We will refer to $K$ as a tuple, when $|X|$ is specified but not $X$ and as a function when $X$ is uniquely determined.
Using this notation, we denote the eigenvectors of $\cL_N$  by 
\be
\Phi^K_X = \bigotimes_{k\in X}  \Phi_{K(k),k}.
\ee
Then consider the subspaces $\cB_n = \mbox{span}\{ \Phi^K_X : |X| = n\}$  which induce a natural block decomposition of
\be
	\cM_{d^N} = \cB= \bigoplus_{n=0}^N \cB_n,
\ee
since these spaces are spanned by eigenvectors. 
Moreover, we can bound the spectrum of $\cL \mid_{\cB_n}$ by  $\sum_k \lambda_{i_k,k} = \sum_{k \in X} \lambda_{i_k,k} \leq - n \Lambda$, where $\Lambda = \min_{k} \Lambda_{k}$ denotes the spectral gap of $\cL$ and $\Lambda_k = - \max_{i\neq 0,k} \lambda_{i,k}$ denotes the spectral gap of $\cL_k$. 
Thus, we have an invariant block decomposition of $\cB$ under $\cL$ where the spectrum within each block $\cB_n$ is contained in $(-\infty,-n\Lambda ]$. 

Let us now proceed to prove the norm bound assumption Eqn. (\ref{hyper3}) of Lemma \ref{ziggybound}. 
Recall that the generators $\cL_k$ are reversible, which implies that their eigenvectors satisfy the orthogonality relation 
\be
 \avr{\Phi_{j,k},\Phi_{l,k}}_{\sigma_k} = \delta_{jl}
\ee
for the $\rho$ weighted scalar product. We can expand any $f_n \in \cB_n$ in terms of eigenfunctions of $\cL$ so that $f_n = \sum_{K,X} \alpha^K_X \Phi^K_X$ where $\alpha^K_X = 0$, whenever $|X| \neq n$. We therefore have that $\|f_n\|_{2,\sigma} = \sqrt{\sum_{K,X} |\alpha^K_X|^2}$.

We define the quartic form $Q(a,b,c,d) = \tr{\sigma^{1/4} a^\dagger \sigma^{1/4} b \sigma^{1/4} c^\dagger \sigma^{1/4} d }$. 
With this definition at hand, we have that 
\bq
\|f_n\|_{4,\sigma}^4 = \sum_{\{K^{(i)}\},\{X^{(i)}\}} {\alpha^*}^{K^{(1)}}_{X^{(1)}}{\alpha}^{K^{(2)}}_{X^{(2)}}{\alpha^*}^{K^{(3)}}_{X^{(3)}}{\alpha}^{K^{(4)}}_{X^{(4)}} \; Q(\Phi^{K^{(1)}}_{X^{(1)}},\Phi^{K^{(2)}}_{X^{(2)}},\Phi^{K^{(3)}}_{X^{(3)}},\Phi^{K^{(4)}}_{X^{(4)}}).
\eq

Before we proceed to give the bound on $\|f_n\|_{4,\sigma} \leq C^n \|f_n\|_{2,\sigma}$ we state two facts about the function $Q(\Phi^{K^{(1)}}_{X^{(1)}},\Phi^{K^{(2)}}_{X^{(2)}},\Phi^{K^{(3)}}_{X^{(3)}},\Phi^{K^{(4)}}_{X^{(4)}})$. We have that

\begin{enumerate}
\item 
Restricting to the  $|X_i| = n$ we can bound
\be \max_{\{\Phi^{K^{(i)}}_{X^{(i)}}\}} Q(\Phi^{K^{(1)}}_{X^{(1)}},\Phi^{K^{(2)}}_{X^{(2)}},\Phi^{K^{(3)}}_{X^{(3)}},\Phi^{K^{(4)}}_{X^{(4)}}) \leq s^{n} \ee
         
This bound can be derived from the the following identities: First by applying the Cauchy-Schwartz inequality twice we obtain
\begin{align} 
Q(\Phi^{K^{(1)}}_{X^{(1)}},\Phi^{K^{(2)}}_{X^{(2)}},\Phi^{K^{(3)}}_{X^{(3)}},\Phi^{K^{(4)}}_{X^{(4)}})=&\tr{\sigma^{1/4}{\Phi^{K^{(1)}}_{X^{(1)}}}^\dag\sigma^{1/4}{\Phi^{K^{(2)}}_{X^{(2)}}}\sigma^{1/4}{\Phi^{K^{(3)}}_{X^{(3)}}}^\dag\sigma^{1/4}{\Phi^{K^{(4)}}_{X^{(4)}}}}\nonumber\\ 
\leq&\sqrt{\tr{\sigma^{1/4}{\Phi^{K^{(1)}}_{X^{(1)}}}^\dag\sigma^{1/4}{\Phi^{K^{(1)}}_{X^{(1)}}}\sigma^{1/4}{\Phi^{K^{(2)}}_{X^{(2)}}}^\dag\sigma^{1/4}{\Phi^{K^{(2)}}_{X^{(2)}}}}}\nonumber\\ 
&\sqrt{\tr{\sigma^{1/4}{\Phi^{K^{(3)}}_{X^{(3)}}}^\dag\sigma^{1/4}{\Phi^{K^{(3)}}_{X^{(3)}}}\sigma^{1/4}{\Phi^{K^{(4)}}_{X^{(4)}}}^\dag\sigma^{1/4}{\Phi^{K^{(4)}}_{X^{(4)}}}}}\nonumber\\
\leq& ||\Phi^{K^{(1)}}_{X^{(1)}}||_{4,\sigma} ||\Phi^{K^{(2)}}_{X^{(2)}}||_{4,\sigma} ||\Phi^{K^{(3)}}_{X^{(3)}}||_{4,\sigma} ||\Phi^{K^{(4)}}_{X^{(4)}}||_{4,\sigma}.
\end{align}

The full expression can now be bounded with the following inequalities. Let us normalize the eigenvectors so that $ ||\Phi^{K}_{X}||_{2,\sigma} = 1$. 
Note that $||\Phi^{K}_{X}||_{4,\sigma} = \prod_{k\in X}  ||\Phi_{K(k),k}||_{4,\sigma_k}$, so that we have
\begin{equation} 
||\Phi^{K}_{X}||_{4,\sigma} = \prod_{k\in X}  ||\Phi_{K(k),k}||_{4,\sigma_k} \leq \prod_{k\in X}  ||\sigma_k^{-1}||^{1/4}  ||\Phi_{K(k),k}||_{4,\sigma_k} \leq s^{|X|/4}
\end{equation}
Recall that we consider the space where $n = |X^{(j)}|$, so that from the previous inequality we get  \be Q(\Phi^{K^{(1)}}_{X^{(1)}},\Phi^{K^{(2)}}_{X^{(2)}},\Phi^{K^{(3)}}_{X^{(3)}},\Phi^{K^{(4)}}_{X^{(4)}})\leq s^n,\ee
         
\item 
If for the sets $X^{(1)},X^{(2)},X^{(3)},X^{(4)}$  we can find a site with only a single excitation, the $Q$ form vanishes. 
That is, if for any $l \in \{1,2,3,4\}$ we have that $X^{(l)} \not\subseteq \cup_{j \neq l} X^{(j)}$, then $Q(\Phi^{K^{(1)}}_{X^{(1)}},\Phi^{K^{(2)}}_{X^{(2)}},\Phi^{K^{(3)}}_{X^{(3)}},\Phi^{K^{(4)}}_{X^{(4)}}) = 0$.\\
	
Since both the eigenvectors $\Phi^K_X$ as well as $\sigma = \otimes_{j} \sigma_j$ are of tensor products form, the trace factorizes and we can write $Q$ as the product of  traces over the local Hilbert spaces.  
In particular if we have one $k \in X^{(l)} \setminus \cup_{j \neq l} X^{(j)}$, this implies that one factor is 
\be
	\ptr{k}{\rho_k^{1/4}\Phi_{K^{(l)}(k),k}\rho_k^{1/4}\Phi_{0,k}\rho_k^{1/4}\Phi_{0,k}\rho_k^{1/4}\Phi_{0,k}} = \avr{\Phi_{K^{(l)}(k),k},\Phi_{0,k}}_{\rho_k} = 0.
\ee
So the total product is given by $Q(\Phi^{K^{(1)}}_{X^{(1)}},\Phi^{K^{(2)}}_{X^{(2)}},\Phi^{K^{(3)}}_{X^{(3)}},\Phi^{K^{(4)}}_{X^{(4)}}) = 0$.
\end{enumerate}
\bigskip

With these properties of the  $Q$-function at hand, we can proceed to bound the norm
\bq
\|f_n\|_{4,\sigma}^4 &=& \sum_{\{K^{(i)}\},\{X^{(i)}\}} {\alpha^*}^{K^{(1)}}_{X^{(1)}}{\alpha}^{K^{(2)}}_{X^{(2)}}{\alpha^*}^{K^{(3)}}_{X^{(3)}}{\alpha}^{K^{(4)}}_{X^{(4)}} \; Q(\Phi^{K^{(1)}}_{X^{(1)}},\Phi^{K^{(2)}}_{X^{(2)}},\Phi^{K^{(3)}}_{X^{(3)}},\Phi^{K^{(4)}}_{X^{(4)}}). \no
 &\leq& s^n S_n,
 \eq
 where we have defined
 \be
 S_n = \text{$\sum$}'_{\{K^{(i)}\},\{X^{(i)}\}}  |{\alpha}^{K^{(1)}}_{X^{(1)}}| |{\alpha}^{K^{(2)}}_{X^{(2)}}||{\alpha}^{K^{(3)}}_{X^{(3)}}||{\alpha}^{K^{(4)}}_{X^{(4)}}|.
 \ee
The primed sum indicates that we constrain the full summation indices on sets $X^{(1)},\ldots,X^{(4)}$ so that at every site there is more than one particle. Let us now proceed to bounding the sum.  Taking this into account we can introduce new summing sets by writing  $X^{(ij)} = X^{(i)} \cap X^{(j)}$ and only summing over sets which satisfy $X^{(i)} = \cup_{i \neq j} X^{(ij)}$. That is we write for the sum now
\begin{align}
S_n = \sum_{K^{(i)}} \sum_{X^{(ij)}}&
 |{\alpha}^{K^{(1)}}_{X^{(12)} \cup X^{(13)} \cup X^{(14)}}| 
 |{\alpha}^{K^{(2)}}_{X^{(21)} \cup X^{(23)} \cup X^{(24)}}|\\
 &|{\alpha}^{K^{(3)}}_{X^{(31)} \cup X^{(32)} \cup X^{(34)}}|
 |{\alpha}^{K^{(4)}}_{X^{(41)} \cup X^{(42)} \cup X^{(43)}}|.
\end{align}
Note, that we have $X^{(ij)} = X^{(ji)}$ so we can employ Cauchy-Schwartz inequality on any pair of sets. 
We first choose $X^{(12)},X^{(34)}$, which leads to the bound
\begin{align}
S_n  \leq \sum_{K^{(i)}} \sum_{X^{(ij)} \neq X^{(12)},X^{(34)} } &
\sqrt{\sum_{X^{(12)}}|\alpha^{K^{(1)}}_{X^{(12)} \cup X^{(13)} \cup X^{(14)} }|^2}\sqrt{\sum_{X^{(12)}} |{\alpha}^{K^{(2)}}_{X^{(21)} \cup X^{(23)} \cup X^{(24)}}|^2} \\
& \sqrt{\sum_{X^{(34)}}|{\alpha}^{K^{(3)}}_{X^{(31)} \cup X^{(32)} \cup X^{(34)}}|^2}\sqrt{\sum_{X^{(34)}} |{\alpha}^{K^{(4)}}_{X^{(41)} \cup X^{(42)} \cup X^{(43)}}|^2} 
\end{align}
Proceeding inductively, we obtain the bound
\begin{align}
S_n  \leq \sum_{K^{(i)}} & \sqrt{\sum_{X^{(12)} , X^{(13)} , X^{(14)} }|\alpha^{K^{(1)}}_{X^{(12)} \cup X^{(13)} \cup X^{(14)} }|^2}\sqrt{\sum_{X^{(21)} ,X^{(23)} , X^{(24)} } |{\alpha}^{K^{(2)}}_{X^{(21)} \cup X^{(23)} \cup X^{(24)}}|^2} \\
                                    &\sqrt{\sum_{X^{(31)}, X^{(32)} , X^{(34)}}|{\alpha}^{K^{(3)}}_{X^{(31)} \cup X^{(32)} \cup X^{(34)}}|^2}\sqrt{\sum_{X^{(41)},X^{(42)},X^{(43)}} |{\alpha}^{K^{(4)}}_{X^{(41)} \cup X^{(42)} \cup X^{(43)}}|^2}
\end{align}
The constraint that $X^{(i)} = \cup_{i\neq j} X^{(ij)}$ implies that there are in total $2^{2n}$ possible combinations of the sets $X^{(ij)}$ compatible with a given $X^{(i)}$, so that we can write
\be
S_n  \leq 2^{4n} \prod_{i=1}^4\sum_{K^{(i)}} \sqrt{\sum_{X^{(i)} }|\alpha^{K^{(i)}}_{X^{(i)} }|^2}. 
\ee 
The number of sets $K$ is bounded by  $d^{2n}$ in each sum we can again apply the Cauchy Schwartz inequality  in the summand $K$ in order to obtain the final bound on the sum
\be
	S_n \leq 2^{4n}d^{4n} \|f_n\|_{2,\sigma}^4.
\ee
This finally leads to the subspace norm bound of 

\be ||f_n||_{4,\sigma}\leq (s^{1/4} 2 d)^n ||f_N||_{2,\sigma}.\ee

With the previous discussion at hand, we are now in a position to apply  Corollary \ref{main-corr}, with the spectral gap $\Lambda$ and the constant $C =  2 s^{1/4}  d$. Hence, we have the following bound on the log-Sobolev constants.
\be
 \alpha \geq \frac{\Lambda}{\log\left(C^4 2^8 e^2\right)} = \frac{\Lambda}{\log\left(d^4 s 2^{12} e^2\right)} \geq \frac{\Lambda}{\log(d^4 s)+11}
 \ee
To complete the upper bound, recall that $\alpha \leq \Lambda$ for any reversible semigroup \cite{olkiewicz1999hypercontractivity,kastoryano2013quantum}. \qed}

\section{Hypercontractivity for Thermal maps.}\label{sec:DaviesMaps}

Throughout this section, we will consider a special subclass of Liouvillians which are often referred to as \textit{Davies generators} \cite{davies1976quantum}. These are derived by considering a system weakly coupled to a thermal bath at a fixed inverse temperature $\beta$. This situation has been studied extensively under the name \textit{weak coupling limit}, where it has been shown that under certain approximations, the system dynamics is effectively described by a Markovian master equation in Lindblad form. The Davies generator drives the system, characterized by a Hamiltonian $H_S$, into the Gibbs state at the inverse temperature $\beta$ of the heat bath.

The dissipative dynamics arises  through a weak coupling between the system and the bath, by tracing  out bath degrees of freedom to first non-trivial order in the coupling. The global Hamiltonian of the system and reservoir is given by the sum of the system Hamiltonian $H_S$, the reservoir Hamiltonian $H_R$ and a weak interaction $V$, 
\begin{equation}\label{eq:InteractionHamiltonian}
	H_{tot} = H_S + H_{R} + V \Sp \mbox{where,} \Sp V = \sum_{\alpha} S^{\alpha} \otimes R^{\alpha}.
\end{equation}
We can choose the system and bath  coupling operators $S^\alpha$ and $ R^{\alpha}$ to be Hermitian. Under the assumptions of weak interaction strength with respect to system transition frequencies (also called Bohr frequencies) and rapidly equilibrating bath, the reduced evolution of the system can be described  by a Davies generator $ \cL_\beta(f)$. See \cite{davies1976quantum,alicki2007quantum} for a clear derivation. The canonical form of the Davies generators is given by

\be
\label{thermalLio}
\cL_\beta(f)=  i[H_{eff},f] + \sum_{\omega,\alpha}\cL_{\omega,\alpha}(f).
\ee

The individual summands are 

\bq \label{eqn:Davies2}
\cL_{\omega,\alpha} (f) &=& G^\alpha(\omega)\left({S^\alpha}^\dag(\omega)f S^\alpha(\omega) - \frac{1}{2} \{{S^\alpha}^\dag(\omega)S^\alpha(\omega),f\} \right), \\
H_{eff} &=& H_S + H_{LS}, 
\eq 

where variable $\omega$ refers to the transition frequencies of the system Hamiltonian, i.e. eigenvalue differences $\omega = \epsilon_i - \epsilon_j$ of $H_S=\sum_k \epsilon_k\ket{k}\bra{k}$, and the index $\alpha$ enumerates terms in the interaction Hamiltonian. The functions $G^\alpha(\omega)$ are obtained from  the real part of the Fourier transform of the two point correlation functions of the environment, and are bounded. These functions depend in general on the specific physical model of the bath and encode the equilibrium temperature. The Lindblad operators are the Fourier components of the coupling operators $S^\alpha$  in the interaction picture given by 

\be	
e^{iH_S t} S^\alpha e^{-i H_S t } = \sum_\omega S^\alpha(\omega) e^{i \omega t}.
\ee

The effective Hamiltonian $H_{eff} = H_S + H_{LS}$ is obtained from the System Hamiltonian and an additional Lamb-shift term given by 
\be \label{HLS}
 H_{LS} = \sum_{\omega}\sum_{\alpha,\beta} \gamma_{\alpha,\beta}(\omega) S^{\dag\alpha}(\omega)S^{\beta}(\omega),
\ee
where $\gamma_{\alpha,\beta}(\omega)$  stem from the imaginary part of the bath correlation functions. The operators $S^\alpha(\omega)$ drive transitions transferring an amount of energy $\omega$ between the bath and the system. A direct evaluation shows that the operators $S^\alpha(\omega)$ are of  the form
\be\label{Davi-lind}
	S^\alpha(\omega) = \sum_{\epsilon_k - \epsilon_m = \omega} S_{km}^\alpha \ket{k}\bra{m},
\ee
with $S_{km}^\alpha = \bra{k}S^\alpha\ket{m}$. In general, if $\{ S_\alpha,H_S\}$ span the full matrix algebra of a finite system, then the Liouvillian can be seen to have a unique full-rank stationary state \cite{alicki2007quantum}. In particular, for a thermal bath, this unique fixed point can be shown to be $\sigma \propto e^{-\beta H_S}$, where $\beta$ is the inverse temperature of the heat bath. Furthermore, the following useful relations hold for any $\alpha$ and $\omega$: 
\begin{align}
G^\alpha(-\omega) &= e^{-\beta \omega}G^\alpha(\omega)\label{DBDavies1}\\ 
\sigma S^\alpha(\omega) &= e^{\beta \omega}S^\alpha(\omega)\sigma \label{DBDavies2}.
\end{align}
The condition (\ref{DBDavies1}) for the functions $G^\alpha(\omega)$ is often referred to as KMS condition \cite{kossakowski1977quantum} and ensures together with (\ref{DBDavies2}) the reversibly (c.f. Definition \ref{def:DB}) of the generator $\cL_\beta$ , as can easily be verified.\\ 

It is not difficult to see that the effective Hamiltonian $H_{eff}$ does not affect the hypercontractive properties of the generator. This follows from the fact that 
$[H_{LS}, H_{S}] = 0$ demonstrated in \cite{alicki2007quantum}. Therefore any power $r \in \bR$ of the  fixed point of the Davies generator  $\sigma \propto \exp(-\beta H_S)$ commutes with the effective Hamiltonian  $\left[H_{eff},\sigma^r \right] = 0$. The contribution of the commutator $i \left[ H_{eff}, \cdot\right]$ to the Dirichlet form $\cE(f)$ therefore vanishes, since $  -i \left\langle f, [H_{eff},f]\right\rangle_\sigma = 0$. Hence the Hamiltonian in Eqn.  (\ref{thermalLio}) does not contribute to the Dirichlet form $\cE(f)$ in the logarithmic Sobolev inequality (Eqn. (\ref{Eqn:LSI})) and therefore does not modify the log-Sobolev constant.
Moreover, since Davies generators can be shown to be strongly $\bL_p$ - regular, c.f. \cite{kastoryano2013quantum}, the log-Sobolev inequality as given in  
Eqn. (\ref{Eqn:LSI}) is in fact equivalent to the hypercontractivity of the semigroup with the same log-Sobolev constant. 
\\

We will therefore ignore this contribution of the Hamiltonian $H_{eff}$ to the full Davies generator and investigate only the generator
\be
\cL_\beta(f)=  \sum_{\omega,\alpha}\cL_{\omega,\alpha}(f),
\ee 
to determine hypercontractivity for the complete semigroup. Throughout the remainder of the paper this generator will be referred as just the Davies generator for convenience.\\

Recently a technique was devised which permits finding lower bounds on the spectral gap of these generators for integrable systems \cite{temme2013lower}. Furthermore, it was shown, c.f. \cite{kastoryano2013quantum}, that Davies generators, independent of the choice of couplings or system Hamiltonian, do in fact satisfy the $\bL_p$-regularity condition mentioned previously. This implies due to Theorem \ref{thm:hypervsLS}, that the existence of a logarithmic Sobolev inequality and hypercontractivity are  equivalent for this class of semigroups.

\subsection{Graph state Hamiltonians}
One relevant application of this formalism for product channels is to the case of Davies maps associated to a graph state Hamiltonian \cite{hein2006entanglement}. Let $G(V,E)$ be a graph with vertices $V$ and edges $E$.  The associated graph Hamiltonian acting on $N\equiv | V |$ qubits is given by

\be\label{GraphSH} 
H = \sum_{j\in V} S_j= \sum_{j\in V} X_j \prod_{\{k,j\}\in E } Z_k,
\ee
where $X_j$ and $Z_j$ denote the standard Pauli matrices acting on the $j$-th qubit respectively. The Hamiltonian $H$ is of Pauli stabilizer form, with the $S_j := X_j \prod_{\{k,j\} \in E } Z_k$ being the commuting stabilizer operators. These Hamiltonians and their unique ground states (graph states) have been studied extensively in the literature (see Ref. \cite{hein2006entanglement} and references therein). In particular, assuming the graph is a 2D square lattice, the unique ground state has been shown to be universal for measurement based quantum computation \cite{briegel2009measurement}. Note that at sufficiently low temperatures the thermal and ground states are close. In this sense, preparing low temperature thermal states equates to preparing the main resource for measurement based quantum computation.\\

The graph state Hamiltonian is equivalent to a product Hamiltonian $H_Z =\sum_j Z_j$ under a  unitary transformation $U$ which is geometrically local on the graph $G$. In particular this means that the Gibbs state associated with $H_Z$  expressed in the new basis, called graph state basis, is also of product form, i.e. $\sigma = \otimes_{j} \sigma_j$, with $\sigma_j =  (2\cosh(\beta))^{-1}\exp(-\beta Z_j)$.\\

The unitary transformation, mapping the computational basis to the graph state basis, can be succinctly described in terms of the graph $G(V,E)$ as
\begin{align}
U = \left( \prod_{\{k,j\} \in E} CZ_{kj} \right) \bigotimes_{i\in V} H_i.
\end{align}
$H_j$ is the Hadamard operator associated to site $j$ and $CZ_{kj} = CZ_{jk}$ is the controlled phase gate on qubits $j$ and $k$ described by 
\begin{align}
CZ_{kj} = 
\begin{pmatrix}
1 & 0 & 0 & 0 \\
0 & 1 & 0 & 0 \\
0 & 0 & 1 & 0 \\
0 & 0 & 0 & -1
\end{pmatrix}, \qquad H_{j}=\frac{1}{\sqrt{2}}\begin{pmatrix} 1 & 1 \\ 1 & -1 \end{pmatrix}.
\end{align} 
Clearly, the $CZ_{kj}$ are all commuting, since they are diagonal in the computational basis.

The transformation $U$ yields an orthonormal basis of eigenstates $\ket{b}$, where $b\in\{0,1\}^{N}$ is a binary vector when applied to the original computational basis. Since $U^\dagger S_j U = Z_j$, each element of the graph basis is an eigenvector of the stabilizer operators with eigenvalues $\pm 1$ 
\be 
S_j \ket{b}= U Z_j U^\dagger U\ket{b^{(z)}} = (-1)^{b_j}\ket{b}.
\ee
Furthermore, local $U^\dagger Z_j U = X_j$ and hence $Z_j$ operators on the graph basis, act as $X_j$ on the computational basis
\be
 Z_j \ket{b_1,..., b_j,...,b_N}=\ket{b_1,..., (1-b_j),...,b_N}.
\ee

\begin{theorem}
Let $H$ be a graph state Hamiltonian as in Eqn. (\ref{GraphSH}), and let $\cL_\beta$ denote it's Davies generator which originates from the couplings $\{X_i,Y_i,Z_i\}_{i \in V}$ to a thermal environment, then the log-Sobolev constant is bounded by
\be
	\frac{G(2)+ G(-2)}{2\log(e^{2\beta} +1) + 28} \leq \alpha,
\ee
where $G$ is the spectral density of the thermal bath (see Eqn. (\ref{eqn:Davies2})).
\end{theorem}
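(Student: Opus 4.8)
The plan is to reduce the bound to the product‑channel estimate of Theorem~\ref{thm:LSproduct}, after isolating the piece of the Davies generator that comes from the $\{Z_i\}$ couplings and passing to the graph basis. Since the baths attached to the different coupling operators are independent, the (Hamiltonian‑free) Davies generator splits as $\cL_\beta=\cL_\beta^{X}+\cL_\beta^{Y}+\cL_\beta^{Z}$, where $\cL_\beta^{P}=\sum_{i\in V}\sum_\omega\cL_{\omega,(i,P)}$ collects the dissipators generated by $\{P_i\}_{i\in V}$. Each of $\cL_\beta^{X},\cL_\beta^{Y},\cL_\beta^{Z}$ is itself a Davies generator for $H$, hence reversible with the same full‑rank fixed point $\sigma\propto e^{-\beta H}$ and thus negative semidefinite on $\bL_2(\sigma)$. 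Because the $\bL_2$‑entropy $\Ent(f)$ depends only on $\sigma$ and $f$, additivity of the Dirichlet forms gives $\cE_{\cL_\beta}(f)=\cE_{\cL_\beta^{X}}(f)+\cE_{\cL_\beta^{Y}}(f)+\cE_{\cL_\beta^{Z}}(f)\geq\cE_{\cL_\beta^{Z}}(f)$ for every $f$, so $\alpha(\cL_\beta)\geq\alpha(\cL_\beta^{Z})$. It therefore suffices to lower bound $\alpha(\cL_\beta^{Z})$.

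Next I would conjugate $\cL_\beta^{Z}$ by the geometrically local unitary $U$; this leaves the log‑Sobolev constant unchanged (all inner products and entropies transform covariantly), maps $H$ to $H_Z=\sum_j Z_j$ and $\sigma$ to the product state $\bigotimes_j\sigma_j$ with $\sigma_j=(2\cosh\beta)^{-1}e^{-\beta Z_j}$, so that $\|\sigma_j^{-1}\|=e^{2\beta}+1$. The key point is that under $U$ each coupling $Z_i$ becomes the single‑site operator $U^\dagger Z_iU=X_i$, and a short computation of its Bohr‑frequency components with respect to $H_Z$ shows the associated jump operators are $\ket{0}\bra{1}$ and $\ket{1}\bra{0}$ on site $i$ alone. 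Hence $\cL_\beta^{Z}$ turns into a tensor sum $\sum_j\cL_j$ of commuting single‑qubit Davies generators, each equal to the Davies generator of the one‑qubit Hamiltonian $Z$ coupled through $X$ to the bath with spectral density $G$.

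A direct diagonalization of one such $\cL_j$ shows that it annihilates $\1$, acts as $-\tfrac12(G(2)+G(-2))$ on the two‑dimensional off‑diagonal subspace and as $-(G(2)+G(-2))$ on the remaining diagonal mode; in particular its fixed point is full rank and its spectral gap is $\Lambda_j=\tfrac12(G(2)+G(-2))$, independent of $j$ and of $N$. Applying Theorem~\ref{thm:LSproduct} with $d=2$, $s=\max_j\|\sigma_j^{-1}\|=e^{2\beta}+1$ and $\Lambda=\min_j\Lambda_j=\tfrac12(G(2)+G(-2))$ then gives
\be
\alpha(\cL_\beta^{Z})\;\geq\;\frac{\Lambda}{\log(d^4 s)+11}\;=\;\frac{G(2)+G(-2)}{2\log\!\big(16(e^{2\beta}+1)\big)+22}\;\geq\;\frac{G(2)+G(-2)}{2\log(e^{2\beta}+1)+28},
\ee
where the last step uses $2\log 16+22=8\log 2+22<28$. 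Combining this with $\alpha(\cL_\beta)\geq\alpha(\cL_\beta^{Z})$ from the first step finishes the proof.

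The main obstacle is the first step. One cannot apply the product theorem to the full generator directly: in the graph basis the couplings $X_i$ and $Y_i$ become $U^\dagger X_iU$ and $U^\dagger Y_iU$, which are supported on the whole neighbourhood of $i$, and their Bohr components mix graph‑basis operators of different weights, so $\cL_\beta$ is genuinely \emph{not} a product and does not respect the block decomposition required by Lemma~\ref{ziggybound}. The $\{Z_i\}$ part is the unique piece that becomes strictly single‑site after the change of basis, and the crucial observation that makes the argument go through is that discarding the remaining (still reversible, fixed‑point‑preserving) dissipators can only decrease the Dirichlet form, hence only weaken the lower bound on $\alpha$. The rest — tracking the jump operators under $U$, the single‑qubit diagonalization, and the bookkeeping of the constants in Theorem~\ref{thm:LSproduct} — is routine.
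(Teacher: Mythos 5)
Your proposal is correct and follows essentially the same route as the paper: discard the $X_i,Y_i$ dissipators via monotonicity of the Dirichlet form, pass to the graph-state basis where the $Z_i$ couplings become single-site raising/lowering Lindblad operators, and apply Theorem \ref{thm:LSproduct} with $d=2$, $s=e^{2\beta}+1$, $\Lambda=\tfrac12(G(2)+G(-2))$. The single-qubit spectral computation and the constant bookkeeping both match the paper's proof.
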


\begin{proof}
An important property of the Davies generators is that the Gibbs state is the unique stationary state if the Hamiltonian and the system-bath coupling operators have a trivial commutant \cite{hein2006entanglement}.  By construction, the graph Hamiltonian has a trivial commutant with the set of $\{Z_j\}$, hence we only need to consider couplings to the bath with the local $Z_j$. Furthermore, since the stationary state is unique and determined through the KMS condition  the gap and log-Sobolev constant will only increase by including further couplings. Indeed,
\be
\alpha_Z=\inf_f \frac{\cE_Z(f)}{\Ent(f)}\leq \inf_f \frac{\cE_Z(f)+\cE_X(f)+\cE_Y(f)}{\Ent(f)}=\alpha .
\ee

Working in the graph state basis, $(S_j, Z_j) \rightarrow (Z_j,X_j)$ we may exploit the product nature of the evolution and express the coupling operator in the interaction picture in terms of two components, corresponding to the only two available Bohr frequencies

\begin{align} e^{-itH} X_j e^{itH}&=  \prod_k e^{-itZ_k} X_j \prod_{k'} e^{-itZ_{k'}}\\
&= e^{-itZ_j} X_j e^{-itZ_{j}}\\
&= e^{-2i t}\ket{0}\bra{1}+e^{2i t}\ket{1}\bra{0}.
\end{align}

Thus, working purely in the graph state basis, the Lindblad operators are local qubit raising $a^+:=\ket{1}\bra{0}$ and lowering  $a^-:=\ket{0}\bra{1}$ operators. Hence, the semigroup yields a product channel given by

\begin{align}
\cL(f) &:= \sum_{j \in V} \cL_j(f) \\
\cL_j(f) &:=  G(2) \left( a^+_j f a^-_j - \frac{1}{2}\{a^+_j a^-_j,f \} \right) + G(-2) \left( a^-_j f a^+_j - \frac{1}{2}\{a^-_j a^+_j, f \} \right),
\end{align}

where $G(\pm 2)$ are rates derived from the corresponding spectral densities and satisfy $G(-2 ) /G( 2 ) = e^{-2\beta}$ for a thermal bath. Each of the local Liouvillians $\cL_j$ has a gap $\lambda_Z=\frac{G(2)+G(-2)}{2}$ which is given by the decay rate of off-diagonal elements, and the smallest eigenvalue of the reduced steady state is $\Vert \sigma_j^{-1} \Vert = e^{2\beta} +1$. Thus we can use Theorem \ref{thm:LSproduct} to guarantee a log-Sobolev constant $\alpha_Z \geq \frac{G(2)+ G(-2)}{2}\frac{1}{\log(e^{2\beta} +1) +14}$. \qed
\end{proof}
\bigskip

\emph{Comment:}
We may consider the implication of this result for preparing an $\epsilon$ approximation of a pure graph state. In order to do this we will impose
\begin{equation}
|| e^{t\cL_\beta}(\rho)- | 0 \rangle\langle 0| ||_1 \leq  || e^{t\cL}(\rho)-\sigma ||_1 +  \Vert \sigma - | 0 \rangle\langle 0| \Vert_1 \leq \epsilon/2 +\epsilon/2
\end{equation}
For normalized states, we have
$
\Vert \sigma - | 0 \rangle\langle 0| \Vert_1 = 2 - 2\langle 0| \sigma | 0 \rangle
$.
Using that both the $N$ particle thermal state $\sigma$, and $| 0 \rangle\langle 0|$ are of product form in the graph state basis allows simplifying
\begin{equation}
\langle 0| \sigma | 0 \rangle = \prod_{j \in V} \tr{\langle 0|_j \sigma | 0 \rangle_j } =  \left(1-(e^{2\beta}+1)^{-1}\right)^{N} \geq 1-N(e^{2\beta}+1)
^{-1},
\end{equation}
where the last inequality can be seen as a union bound. Hence, to ensure $\Vert \sigma - | 0 \rangle\langle 0| \Vert_1 \leq \epsilon/2$, it is sufficient to choose the inverse temperature $\beta$ such that $\beta \geq \log(4N/\epsilon)/2$. Furthermore, according to Eqn. (\ref{LSmixing}) we may guarantee that
$  
|| e^{t\cL_\beta}(\rho)-\sigma ||_1 \leq \epsilon/2 
$
for 
$
\sqrt{2\log(\Vert \sigma^{-1} \Vert)} e^{-t \alpha} \leq \epsilon/2
$. 
Since $\Vert \sigma^{-1} \Vert = (e^{2\beta} +1 )^N \approx \left( 4N/\epsilon \right)^{N}$ for $\beta \geq \log(4N/\epsilon)/2 $, we evaluate this condition to
\begin{equation}
 e^{2t \alpha} \geq \frac{8N\log\left(4N/\epsilon\right)}{\epsilon^2}.
\end{equation}
Taking the logarithm and making the low temperature approximation $\alpha^{-1} \approx 4 \beta = 2\log\left( {4N}/{\epsilon} \right)$ in this condition,  we may extract the time 
\begin{equation}
 t_\epsilon \approx  \log\left({4N}/{\epsilon} \right) \log\left(  \frac{8N\log\left(4N/\epsilon\right)}{\epsilon^2} \right) = O\left( \log^2\left( {N}/{\epsilon}\right)\right).
\end{equation}
such that for $t > t_\epsilon$ the Davies thermalizing evolution can be guaranteed to produce $\epsilon$-approximate ground states assuming the temperature has been appropriately chosen, i.e. $\beta = O\left(\log(N/\epsilon)\right)$.

\subsection{Non-interacting Fermions}

As our main example, we now consider a class of semigroups: free fermion systems  coupled linearly to a thermal  bath. Fermionic hypercontractivity has been considered previously in the context of the Schr\"odinger semigroup \cite{lindsay1992fermionic,carlen1993optimal}. That treatment is only very loosely related to the analysis below. Free-fermionic systems are endowed with a natural block diagonal structure, which makes them particularly well suited for the strategy outlined in Sec. \ref{sec:strategy}. 
 
We consider the Davies generator for a system described by a quadratic fermion Hamiltonian $H$, which can always be diagonalized as 
\be
	H_S = \sum_{k=1}^N \nu_k d_k^\dag d_k.
\ee
The $N$ - modes each with energy $\nu_k \geq 0$, are taken to obey fermionic anti-commutation relations so that $\{d^\dag_k,d_j\} = \delta_{kj}$ and 
$\{d_k,d_j\} = 0$. 

We assume that this Hamiltonian couples weakly to a heat bath through operators linear in the diagonal fermionic operators
\be
	S_\alpha = \sum_{k=1}^N s_{\alpha,k} d_k + s^*_{\alpha,k} d^\dag_k.
\ee
By physical considerations, the full interaction Hamiltonian should be fermion parity-conserving and Hermitian.
In contrast to the usual form of eq. (\ref{eq:InteractionHamiltonian}), the interaction Hamiltonian $V$ should be of the form
\begin{equation}\label{eq:FermionicInteractionHamiltonian}
V = \sum_\alpha i S_\alpha B_\alpha, 
\end{equation}
where $B_\alpha$ are operators supported on the bath algebra which are odd in fermionic operators. 
The anti-commutation of $S_\alpha$ and $B_\alpha$ guarantees the Hermiticity of each term and makes it necessary to abandon tensor product notation when considering joint operators of system and bath. 
This same anti-commutation relation will compensate the $i^2$ factor accompanying second order terms in the interaction Hamiltonian $V$ which give rise to dissipation in the Davies approximation\cite{davies1976quantum, davies1977quantum}.

Assuming that the coupling is weak with respect to the relevant Bohr frequencies in the system, it remains legitimate to assume the usual ``secular approximation'' 
by which the Lindblad operators can be obtained as the Fourier components from the $S_\alpha$ in the interaction picture
\be\label{eq:SaInteractionP}
	e^{iHt} S_\alpha e^{-iHt} = \sum_{\omega} S_{\alpha}(\omega) e^{i \omega t}.
\ee
Note that, the time evolution of the fermionic mode operators $d_k(t) = \exp(iHt)d_k\exp( - iHt)$ can be obtained easily from the Heisenberg equations of 
motion, i.e. $\partial_t d_k = i[H,d_k]$ which can be integrated to yield $d_k(t) = \exp(-i \nu_k t)d_k$. With this at hand we can immediately state the time evolution of the $S_\alpha$ and obtain
\be\label{time-evo-ferm}
	e^{iHt} S_\alpha e^{-iHt} = \sum_{k=1}^N s_{\alpha,k} d_k e^{-i\nu_k t} + s^*_{\alpha,k} d^\dag_k e^{i\nu_k t}.
\ee
Let us consider now two different cases:\\

\paragraph{Non-degenerate frequencies:} 

Let us first assume that all $\nu_k$ are distinct and positive, i.e. we have $\nu_N > \ldots > \nu_1 > 0$. If this is the case, the Lindblad operators can be read off from Eqn. (\ref{time-evo-ferm}) directly. We then have that for every Bohr frequency $\omega = \nu_k > 0$ the Lindblad operator  is given by
\be
	S_{\alpha}(\nu_k) = s_{\alpha,k} d_k.
\ee
For this case it is now straightforward to state the full Davies generator, 
\be
\partial_t f  = \sum_{k}  \cL_{k}(f),
\ee
where we have defined the constant 
\be
	g_k = \sum_\alpha G^\alpha(\nu_k)|s_{\alpha,k}|^2,
\ee
so that the Liouville operators can be written as
\be
\cL_{k}(f) = g_k \left(d_k^\dag f d_k - \half\{d^\dag_kd_k,f\}  \right) + e^{-\beta\nu_k} g_k \left(d_k f d_k^\dag - \half\{d_kd^\dag_k,f\}  \right).
\ee

\paragraph{Degenerate and zero-mode frequencies:}

The situation is slightly more complicated, when we allow for degenerate frequencies $\nu_k$ that can also be zero. Let us first investigate what happens, when some frequencies $\nu_{k} = \ldots  = \nu_{l} = \xi$ are degenerate. We then have that the Lindblad operator has to be of the form
\be
 S_{\alpha}(\xi) = \sum_{\nu_l = \xi} s_{\alpha,l} d_l.
\ee
The term associated with the Bohr frequency  $\xi$ is then of the form
\be
\cL_{\xi} =  \sum_{\nu_k,\nu_l = \xi} \chi_{k,l}\left(d_k^\dag f d_l - \half\{d_k^\dag d_l,f\}\right) + e^{-\xi\beta} \chi_{l,k}\left(d_k f d_l^\dag - \half\{d_kd_l^\dag ,f\}\right),
\ee
where we have defined the Hermitian matrix $\chi$ through the entries
\be
    \chi_{k,l} = \sum_\alpha G^\alpha(\xi)s^*_{\alpha,k}s_{\alpha,l}.
\ee
Let $U$ be a unitary transformation diagonalizing $\chi$ so that
\be
	\chi_{k,l} = \sum_{a} \lambda_a [U]_{k,a} [U^\dag]_{a,l},
\ee
and define $\tilde{d}^\dag_a = \sum_{k} d^\dag_k [U]_{k,a}$ as well as $\tilde{d}_a = \sum_{k} d_k [U^\dag]_{a,k}$, so that the anti-commutation relations are preserved.  With this transformation we may write
\be
\cL_{\xi} =  \sum_{a \in \xi}  \tilde{\cL}_a(f),
\ee	
where now 
\be
\tilde{\cL}_a(f) =  \lambda_a \left(\tilde{d}_a^\dag f \tilde{d}_a - \half\{\tilde{d}_a^\dag \tilde{d}_a,f\}\right) + e^{-\xi\beta} \lambda_a \left(\tilde{d}_a f \tilde{d}_a^\dag - \half\{\tilde{d}_a\tilde{d}_a^\dag ,f\}\right).
\ee
Let us now turn to the case where we have zero modes $\nu_k = 0$. In this case we have a Lindblad operator of the following form
\be
	S_{\alpha}(0) = \sum_{\nu_k = 0} s_{\alpha,k} d_k + s^*_{\alpha,k}d_k^\dag = \sum_{\nu_k = 0} \frac{s^*_{\alpha,k} + s_{\alpha,k}}{2}w_{2k-1} - \frac{s^*_{\alpha,k} - s_{\alpha,k}}{2i} w_{2k}
\ee
Consider the Majorana modes $w_j$ with $d_k = \half(w_{2k-1} + iw_{2k})$, so that the anti-commutation relations $\{w_k,w_j\} = 2\delta_{kj}$ hold. These operators are Hermitian $w_k = w_k^\dag$. In these modes, the Liouvillian can be expressed as 
\be
	\cL_0(f) = \sum_{\nu_k,\nu_l = 0} \chi_{kl} \left(w_k f w_l - \half\{w_k w_l,f\}\right),
\ee
where the $\chi_{kl}$ are the entries of a real, symmetric matrix $\chi$ that can be diagonalized by an orthogonal transformation $O$. In addition, there must be an even number of Majorana fermions, so they may be arbitrarily paired to define zero energy fermionic modes. By a similar argument as above we can introduce new Majorana modes $\tilde{w}_{2k}, \tilde{w}_{2k-1}$ that are associated with the pair of eigenvalues $\lambda'_{k}/2 \geq \lambda_{k}/2 \geq 0$ of the matrix $\chi$ and we can write
\be
\cL_0(f) = \sum_{k:\nu_k=0} \frac{\lambda_{k}}{2} \left(\tilde{w}_{k} f \tilde{w}_{k} - f\right) + \frac{\lambda'_{k}}{2} \left(\tilde{w}_{2k-1} f \tilde{w}_{2k-1} - f\right).
\ee

Note that these transformations do not modify the form of the Hamiltonian $H$. We summarize the preceding discussion in the following Proposition, which allows us to only consider a canonical form of fermionic Davies generators that couple linearly to the bath. Moreover,  since every Davies generator satisfies the KMS-condition, and is reversible with respect to the Gibbs state of the system Hamiltonian, we can immediately infer the form of the fixed point of the generator. 

\begin{proposition}[Canonical linear fermionic Davies generator]\label{def:canFermDavies}
Let $H = \sum_{k=1}^N \nu_k d^\dag_kd_k$ be a free fermionic Hamiltonian that couples linearly via $S_\alpha = \sum_{k=1}^N s_{\alpha,k} d_k + s^*_{\alpha,k} d^\dag_k$ to a thermal bath at inverse temperature $\beta$. Then the modes $d_k$ may be chosen such that the dissipative Davies generator is given by
\be
	\cL_\beta(f) = \sum_{k=1}^{N} \cL_k(f),\label{eqn:fermLiouv}
\ee  
where we have defined the Majorana mode operators so that $d_k = \half(w_{2k-1} + iw_{2k})$, and
\begin{align}
\cL_k(f) &\underset{\nu_k=0}{:=} \frac{\lambda_k}{2}\left(w_{2k} f w_{2k} - f\right) + \frac{\lambda'_k}{2} \left({w}_{2k-1} f {w}_{2k-1} - f\right)  \\ 
\cL_k(f) &\underset{\nu_k\neq 0}{:=} \lambda_k\left(d_k^\dag f d_k - \half\{d^\dag_kd_k,f\}  \right) + \lambda_k e^{-\beta\nu_k} \left(d_k f d_k^\dag - \half\{d_kd^\dag_k,f\}  \right),
\end{align}
with $\lambda'_k \geq \lambda_k \geq 0$. Furthermore, the thermal state
\be
	\sigma = \prod_{k=1}^N \frac{1}{2\cosh(\half\beta\nu_k)}e^{-i\frac{\beta}{2} \nu_k w_{2k-1}w_{2k}} = \frac{\exp(-\beta H)}{\tr{\exp(-\beta H)}}.
\ee
is the unique steady state of the  generator in Eqn. (\ref{eqn:fermLiouv}).
\end{proposition}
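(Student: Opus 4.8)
The plan is to fold the three cases treated above --- non-degenerate positive frequencies, degenerate positive frequencies, and zero modes --- into a single normal form, and then read off the stationary state from the general reversibility properties of Davies generators. I would start from the canonical Davies form (\ref{thermalLio})--(\ref{eqn:Davies2}), in which the Lindblad operators are the Fourier components $S_\alpha(\omega)$ of the coupling operators in the interaction picture. Since $H$ is quadratic, the Heisenberg evolution of the mode operators is simply $d_k(t)=e^{-i\nu_k t}d_k$, so (\ref{time-evo-ferm}) holds and the set of Bohr frequencies is exactly $\{0\}\cup\{\pm\nu_k\}$, with $S_\alpha(\nu_k)=\sum_{\nu_l=\nu_k}s_{\alpha,l}d_l$ and $S_\alpha(-\nu_k)=\sum_{\nu_l=\nu_k}s^*_{\alpha,l}d^\dagger_l$ for $\nu_k>0$, and $S_\alpha(0)=\sum_{\nu_k=0}(s_{\alpha,k}d_k+s^*_{\alpha,k}d^\dagger_k)$.

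For each positive Bohr frequency $\xi$ I would collect the $\omega=\pm\xi$ contributions over all $\alpha$: the $+\xi$ part equals $\sum_{k,l:\nu_k=\nu_l=\xi}\chi_{kl}\big(d^\dagger_k f d_l-\half\{d^\dagger_k d_l,f\}\big)$ with the Hermitian matrix $\chi_{kl}=\sum_\alpha G^\alpha(\xi)s^*_{\alpha,k}s_{\alpha,l}$, which is positive semidefinite because the $G^\alpha$ are nonnegative, and by the KMS relation (\ref{DBDavies1}) the $-\xi$ part is $e^{-\beta\xi}$ times the same expression with $\chi$ replaced by its transpose. One then diagonalizes $\chi$ by a unitary $U$ supported on the degenerate block $\{k:\nu_k=\xi\}$ and defines the passive Bogoliubov modes $\tilde d_a=\sum_k d_k[U^\dagger]_{a,k}$; these preserve the canonical anti-commutation relations (since $U$ is unitary and $d,d^\dagger$ are not mixed), both terms become simultaneously diagonal with the real eigenvalues $\lambda_a\geq0$ of $\chi$, and $H$ is unchanged because $U$ acts within an eigenspace of $H$. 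This produces the $\nu_k\neq0$ line of (\ref{eqn:fermLiouv}).

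For the zero-energy sector I would pass to Majorana operators $w_{2k-1},w_{2k}$ via $d_k=\half(w_{2k-1}+iw_{2k})$; then $S_\alpha(0)$ is a real linear combination of the $w_j$ (it is Hermitian), and the associated Lindbladian becomes $\cL_0(f)=\sum_{j,j'}\chi_{jj'}\big(w_j f w_{j'}-\half\{w_j w_{j'},f\}\big)$ with a real symmetric positive semidefinite matrix $\chi$ (there is no $e^{-\beta\xi}$ factor since $\xi=0$). Diagonalizing $\chi$ by an orthogonal $O$ preserves $\{w_j,w_{j'}\}=2\delta_{jj'}$, and since the number of zero Majorana modes is even they may be paired arbitrarily into new fermionic modes; using $w_jfw_j-\half\{w_j^2,f\}=w_jfw_j-f$ and relabeling the (halved) eigenvalues as $\lambda'_k\geq\lambda_k\geq0$ gives the $\nu_k=0$ line of (\ref{eqn:fermLiouv}), with $H$ untouched because zero modes do not appear in it.

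Finally, every Davies generator is reversible with respect to $\sigma\propto e^{-\beta H}$ --- this is exactly what the KMS conditions (\ref{DBDavies1})--(\ref{DBDavies2}) give via Definition \ref{def:DB}, as recalled above --- so $\sigma$ is stationary for (\ref{eqn:fermLiouv}); rewriting $e^{-\beta H}=\prod_k e^{-\beta\nu_k d^\dagger_k d_k}$ with $d^\dagger_k d_k=\half(1+iw_{2k-1}w_{2k})$ and $\tr{e^{-\beta\nu_k d^\dagger_k d_k}}=1+e^{-\beta\nu_k}=e^{-\beta\nu_k/2}\,2\cosh(\half\beta\nu_k)$ yields the stated Majorana product form. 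Uniqueness follows from the product structure of (\ref{eqn:fermLiouv}): each $\cL_k$ acts only on its own mode as a thermalizing single-mode generator whose unique fixed point is the local Gibbs state --- the one-fermion thermal state for $\nu_k>0$, and $\half\mathbbm{1}$ for $\nu_k=0$ since $w_{2k-1},w_{2k}$ generate the full one-mode algebra --- so the tensor product of these local fixed points is the unique global stationary state; alternatively one may quote the general statement that a trivial commutant of $\{S_\alpha,H\}$ forces a unique full-rank steady state. The main obstacle is the bookkeeping in the degenerate and zero-mode sectors: checking that the mode redefinitions are genuine canonical transformations preserving the anti-commutation relations and leaving $H$ diagonal, and that the matrices $\chi$ are positive semidefinite so that all rates are nonnegative; with these in hand the remainder is routine.
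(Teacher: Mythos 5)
Your proposal is correct and follows essentially the same route as the paper: read off the Lindblad operators as Fourier components of the $S_\alpha$, handle degenerate positive frequencies by diagonalizing the Hermitian matrix $\chi$ with a unitary within each eigenspace of $H$, handle zero modes in the Majorana representation with an orthogonal diagonalization, and infer the stationary state from the KMS relations. The details you add beyond the paper's discussion (positive semidefiniteness of $\chi$, the explicit normalization of $\sigma$, and the uniqueness argument via the product structure or trivial commutant) are correct and only sharpen the argument.
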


\subsubsection{Block decomposition of the canonical fermionic Davies generator}

Any  Liouvillian with Lindblad operators that are linear in the fermionic creation and annihilation operators can be brought to Jordan normal form, with the Jordan blocks corresponding to dynamical excitations on an extended (doubled) Fock space \cite{prosen2010spectral}. Given that we consider fermionic Davies generators, reversibility guarantees that the Liouvillian can be diagonalized in the doubled Fock space. Hence, these operators have a very natural block decomposition resulting from the diagonalization of the Liouvillian. In order to prove hypercontractivity for linear fermionic Davies Generators, we can work with a  coarser decomposition as long as it meets the requirements of Lemma \ref{ziggybound}. Before we proceed to the block decomposition we will state some necessary definitions.

\begin{definition}\label{excitations}
Let $k \in \{1,\ldots,N\}$, then we define two sets of operators, the {\em even mode operators} $\{f_k\}$ and the {\em odd mode operators} $\{\tilde{f}_k\}$.  
\begin{itemize}

\item
Let us by abuse of notation, define the operators $f_k(b)=f_k(b_{2k-1},b_{2k})$, so that
\begin{eqnarray}
f_k(0,0)  &=&  \1 , \Sp  f_k(1,1)  =  w_{2k-1}w_{2k}\exp\left(\frac{i\nu_k \beta}{2} w_{2k-1}w_{2k}\right), \\
f_k(1,0)  &=&  \sqrt{\cosh\left(\frac{\beta\nu_k}{2}\right)}w_{2k-1}, \Sp  f_k(0,1)  =  \sqrt{\cosh\left(\frac{\beta\nu_k}{2} \right)}w_{2k}. 
\end{eqnarray}
Note that for $\nu_k=0$ this reduces to $f_k(b_{2k-1},b_{2k})=w_{2k-1}^{b_{2k-1}}w_{2k}^{b_{2k}}$. 

\item The operators $\{\tilde{f}_k\}$ are related to $\{f_k\}$ through the identification
\begin{align}
\tilde{f}_k(b) = \tilde{f}_k(b_{2k-1},b_{2k}) = w_{2k-1}w_{2k} f_k(b_{2k-1},b_{2k}).
\end{align}
\end{itemize}  

Moreover, given a string  $b = (b_1,\ldots,b_{2N}) \in \{0,1\}^{2N}$, we define

\begin{align}
f(b) = \left\{
\begin{array}{l}
\prod_{k=1}^N f_k (b) \;\; \text{ if } \;\;  |b|  \;\; \text{is even}\\
 \prod_{k=1}^N \tilde{f}_k (b)\;\; \text{ if } \;\;  |b|  \;\; \text{is odd},
 \end{array}\right.
\end{align} 

with $|b| = \sum_{k}b_k$. Furthermore we define the following sets of eigen-operators, for $n = 0,\ldots,2N$, 
\be
\cE_{n}  = \left\{ f(b)  \vert \;\; b\in\{0,1\}^{2N} \;\; \mbox{and} \;\; |b|  = n \right\}.
\ee
\end{definition}

These definitions form the basis of the block decomposition for the fermionic Davies generators. The blocks will be defined in terms of the span of the product of these mode operators. In fact, we will see that the $f(b)$ are eigenvectors of the Liouvillian. Moreover, note the distinction between an even and an odd number of excitations $|b|$ .  These eigen-operators induce a decomposition that will meet the requirements of Lemma \ref{ziggybound}.    

\begin{lemma}\label{block-structure-ferm}
Let $\cL_\beta$ be a linear fermionic Davies generator in canonical form as in Proposition \ref{def:canFermDavies}, then the semigroup $T_t = \exp(t \cL_\beta)$ preserves the following block decomposition: 

\begin{enumerate}
\item We have a block decomposition of the form
\be
	\cB = \bigoplus_{n=0}^{2N} \cB_{n}.
\ee
The blocks are defined as $
\cB_{n} = \mbox{\em span} \;\;  \cE_{n}$
where $\cE_{n}$ was introduced in Definition \ref{excitations}.
 
\item The spectrum of $\cL_\beta= \bigoplus_{n=0}^{2N} \cL_\beta {\Big |}_{\cB_n}$ is contained in the interval
\be
	\mbox{spec}\left(\cL_\beta {\Big |}_{\cB_n}\right) \in \left(-\infty,-\Lambda n \right],
\ee 
where $\Lambda  = \min_{j=1}^N \Lambda_j$, and we define 
$
\Lambda_j = \lambda_j(\frac{1+e^{-\beta \nu_j}}{2}). 
$
\end{enumerate}
\end{lemma}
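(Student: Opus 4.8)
The plan is to verify the two claims of Lemma \ref{block-structure-ferm} directly from the canonical form of $\cL_\beta$ in Proposition \ref{def:canFermDavies}, reducing everything to a single-mode computation and then tensoring up. First I would establish that each $f(b)$ is an eigenvector of $\cL_\beta$. Since $\cL_\beta = \sum_{k=1}^N \cL_k$ with $\cL_k$ acting only on the $k$-th pair of Majorana modes, and since $f(b) = \prod_k f_k(b_{2k-1},b_{2k})$ (up to the global $w_{2k-1}w_{2k}$ factors in the odd-parity case), it suffices to show $\cL_k\big(f_k(b_{2k-1},b_{2k})\big) = -\mu_k(b_{2k-1},b_{2k})\, f_k(b_{2k-1},b_{2k})$ for explicit nonnegative $\mu_k$, and then check that the action of $\cL_j$ for $j \neq k$ on the global parity factors $w_{2k-1}w_{2k}$ (which are $+1$ eigen-pieces, i.e. commute through) does not spoil the eigenvector property. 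Concretely, for $\nu_k \neq 0$ I would rewrite $\cL_k$ in terms of $d_k^\dag d_k$ and use the standard Lindblad dissipator spectrum on a single fermionic mode: the four operators $\1,\ d_k^\dag d_k,\ d_k,\ d_k^\dag$ (suitably dressed by the $\sigma$-dependent factors that appear in the definition of $f_k(b)$) are eigen-operators with eigenvalues $0,\ -\lambda_k(1+e^{-\beta\nu_k}),\ -\tfrac{1}{2}\lambda_k(1+e^{-\beta\nu_k})$ (twice). The dressing by $\exp(\tfrac{i\nu_k\beta}{2}w_{2k-1}w_{2k})$ and the $\sqrt{\cosh(\beta\nu_k/2)}$ normalizations is exactly what is needed to make these genuine eigenvectors of the $\sigma$-weighted (reversible) generator rather than merely of a non-normal map; I would check this by a direct substitution. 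For $\nu_k = 0$ the generator is the dephasing-type $\tfrac{\lambda_k}{2}(w_{2k}\cdot w_{2k} - \cdot) + \tfrac{\lambda'_k}{2}(w_{2k-1}\cdot w_{2k-1}-\cdot)$, and the four monomials $w_{2k-1}^{b_{2k-1}}w_{2k}^{b_{2k}}$ are eigen-operators with eigenvalues $0,\ -\lambda_k,\ -\lambda'_k,\ -(\lambda_k+\lambda'_k)$, using $w_j^2 = \1$ and the anticommutation relations to evaluate how $w_{2k}$ conjugates each monomial (it commutes or anticommutes, giving $\pm$).

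Next, for claim 1, once each $f(b)$ is an eigenvector, the spaces $\cB_n = \mathrm{span}\,\cE_n$ are automatically invariant under $T_t = e^{t\cL_\beta}$, and I would note that $\{f(b) : b \in \{0,1\}^{2N}\}$ is a basis of $\cM_{d^N}$ (the $f_k(b)$ span the single-pair matrix algebra $\cM_4$ by inspection — four linearly independent operators — and products over $k$ then span the tensor product), so $\cB = \bigoplus_{n=0}^{2N}\cB_n$ with $\cB_0 = \mathrm{span}\{\1\}$ as required. One subtlety to address: the $f(b)$ with $|b|$ odd carry extra $w_{2k-1}w_{2k}$ factors and need not be built from the same single-mode pieces; I would point out that multiplication by $\prod_k w_{2k-1}w_{2k} = $ (fermion parity operator, up to a phase) is a fixed invertible operator commuting with every $\cL_k$, so it maps eigenvectors to eigenvectors with the same eigenvalue, which is why the odd-parity sector is handled uniformly.

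For claim 2, the eigenvalue of $f(b)$ under $\cL_\beta$ is $-\sum_k \mu_k(b_{2k-1},b_{2k})$ where, from the single-mode analysis, each nonzero bit of $b$ contributes at least $\Lambda_j = \lambda_j\big(\tfrac{1+e^{-\beta\nu_j}}{2}\big)$ to the sum: indeed for $\nu_k\neq 0$ a single excited bit gives $\tfrac{1}{2}\lambda_k(1+e^{-\beta\nu_k}) = \Lambda_k$ and two bits give $2\Lambda_k$, while for $\nu_k = 0$ a single bit gives $\lambda_k$ or $\lambda'_k \geq \lambda_k = \Lambda_k$ (since $e^{-\beta\cdot 0}=1$ makes $\Lambda_k = \lambda_k$) and two bits give $\lambda_k+\lambda'_k \geq 2\Lambda_k$. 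Hence any $b$ with $|b| = n$ has eigenvalue $\leq -n\Lambda$ with $\Lambda = \min_j \Lambda_j$, so $\mathrm{spec}(\cL_\beta|_{\cB_n}) \subset (-\infty, -\Lambda n]$. I expect the main obstacle to be the bookkeeping in the first step — confirming that the precise $\cosh$-normalizations and the $\exp(\tfrac{i\nu_k\beta}{2}w_{2k-1}w_{2k})$ dressing in Definition \ref{excitations} are exactly right so that $\cL_k$ acts diagonally (this is where the detailed-balance relations $\sigma S^\alpha(\omega) = e^{\beta\omega}S^\alpha(\omega)\sigma$, equivalently $\sigma = \prod_k \frac{1}{2\cosh(\beta\nu_k/2)}e^{-i\frac{\beta}{2}\nu_k w_{2k-1}w_{2k}}$, get used), rather than in the invariance or spectral-interval statements, which follow formally once the eigenvectors are in hand.
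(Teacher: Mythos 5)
Your overall strategy is the same as the paper's: show that each $f(b)$ is an eigenvector of $\cL_\beta$ by a per-mode computation, deduce invariance of the blocks and the basis property by counting, and read off the spectral bound from the per-bit eigenvalue contributions. The even-parity analysis, the counting, and the final bound are all sound (the single-mode eigenvalues you list match the paper's, and the $\cosh$ and exponential dressings do exactly the job you expect them to).

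The gap is in your treatment of the odd-parity sector. You propose to dispose of it by observing that left multiplication by the parity operator $P' = \prod_k w_{2k-1}w_{2k}$ ``commutes with every $\cL_k$'' and hence maps eigenvectors to eigenvectors with the same eigenvalue. This is false: every Lindblad operator $L$ (each of $d_k$, $d_k^\dag$, or a single Majorana) is odd and therefore anticommutes with $P'$, so $L^\dag (P'f) L = -P' L^\dag f L$ while $\{L^\dag L, P'f\} = P'\{L^\dag L, f\}$. Hence $\cL_k(P'f) = P' \tilde{\cL}_k(f)$, where $\tilde{\cL}_k$ is $\cL_k$ with the sign of the jump term flipped; left multiplication by $P'$ intertwines $\cL_k$ with a \emph{different} generator rather than commuting with it. Concretely, $P'\cdot \1 = P'$ is not in the kernel of $\cL_\beta$ even though $\1$ is, and for a zero mode $f_k(1,0)=w_{2k-1}$ has eigenvalue $-\lambda_k$ while $\tilde{f}_k(1,0)\propto w_{2k}$ has eigenvalue $-\lambda'_k$, so the eigenvalues are not even preserved. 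The correct mechanism---and the reason Definition \ref{excitations} introduces the second family $\tilde{f}_k$ at all---is that when $\cL_k$ acts on a product whose tail outside mode $k$ has odd fermionic parity, the sandwiching Majoranas pick up an extra sign from anticommuting past that tail, so mode $k$ effectively sees the sign-flipped generator $\tilde{\cL}_k$, whose eigen-operators are the $\tilde{f}_k(b)$ rather than the $f_k(b)$. One must then redo the single-mode verification for $\tilde{\cL}_k$ (or, as the paper does, compute $\cL_\beta(f(b))$ directly with the anticommutation relations) and check that the resulting eigenvalues still obey the $\leq -\Lambda\,|b|$ bound, which they do: for instance $\tilde{f}_k(0,0)=w_{2k-1}w_{2k}$ contributes $0$ and $\tilde{f}_k(1,0)$ contributes $-\lambda'_k \leq -\Lambda_k$. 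Without this step your argument does not establish that the odd-$|b|$ operators are eigenvectors, which is half of the claimed block decomposition.
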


\proof{ Recall that the Davies generator is of the form $\cL_\beta = \sum_{j=1}^N \cL_j$. 
Before we proceed we present a decomposition 
of $\cL_j$ into $w$ - Majorana fermions. Together with $d_j = \half\left(w_{2j-1} + iw_{2j} \right)$ one can verify that for $\nu_j\neq 0$ we have

\begin{align}
\cL_j(f) &= \lambda_j\frac{1+e^{-\beta \nu_k}}{4}\left(w_{2j-1} f w_{2j-1} + w_{2j}  f w_{2j} - 2f \right) \no 
		    &+  i \lambda_j \frac{1-e^{-\beta \nu_k}}{4}\left(w_{2j-1} f w_{2j} - w_{2j}f w_{2j-1} - \{w_{2j-1}w_{2j},f\} \right).
\end{align}

Whereas for $\nu_k=0$ we already had that by definition

\begin{align}
\cL_k(f) =  \frac{\lambda_k}{2}\left(w_{2k} f w_{2k} - f\right) + \frac{\lambda'_k}{2} \left({w}_{2k-1} f {w}_{2k-1} - f\right).
\end{align}

By making use of the anti-commutation relations of the Majorana fermions, $\{w_k,w_j\} = 2\delta_{kj}$, we have for any $f(b) \in \cE_{n}$, that 

\begin{align}
\cL_\beta(f(b)) &= -\left (\sum_{k: \nu_k=0} \lambda_k b_{2k-1} + \lambda'_k b_{2k} + \sum_{k:\nu_k\neq 0} \lambda_k \frac{1+e^{-\nu_k \beta}}{2}(b_{2k-1} + b_{2k}) \right)f(b).
\end{align}

Hence, all $f(b)$ are in fact eigenvectors of $\cL_\beta$. For a bit string $b$ with $|b| = n$, choosing  $\Lambda  = \min_{j} \Lambda_j$,  we can always find the upper bound
\be
-\left (\sum_{\nu_k=0} \lambda_k  b_{2k} + \lambda'_k  b_{2k-1} + \sum_{\nu_k \neq 0} \lambda_k \frac{1+e^{-\nu_k \beta}}{2}(b_{2k-1} + b_{2k}) \right) \leq -\Lambda|b| = -\Lambda n.
\ee
Identifying $n$ with the number of set bits in $b$, we see that subspaces $\cB_n$ are naturally preserved. 
A simple counting argument shows that  the full operator space $\cB(\cH)$ is spanned by the $2^{2N}$ eigenvectors. Moreover we can directly estimate the bound on the spectrum as given in the Lemma. \qed}

\subsubsection{The log-Sobolev constant for canonical fermionic Davies generators}
Two criteria that are required by Lemma \ref{ziggybound} to ensure hypercontractivity are already met by the Lemma \ref{block-structure-ferm} for linear fermionic Davies generators. To obtain an estimate for the log-Sobolev constant $\alpha$, we need to prove a norm bound for all $f_n \in \cB_{n}$ for the $2\raw 4$ norm. That is we need a bound of the form $\|f_n\|_{2\raw4,\sigma} \leq C^n$ for some constant $C$ and any $n\in\bN$. Before we proceed to prove such a bound, we state a Proposition which will facilitate the derivation. 

\begin{proposition}\label{Q-functionBND}
Let 
\begin{equation}
Q(v^{(1)},v^{(2)},v^{(3)},v^{(4)}) = \tr{\sigma^{1/4} \; v^{(1)\dag} \; \sigma^{1/4} \; v^{(2)} \; \sigma^{1/4} \;  v^{(3)\dag} \; \sigma^{1/4} \; v^{(4)}},
\end{equation}
where the $v^{(i)} \in \cE_n$ are taken as a shorthand notation for $v^{(i)} := f(b^{(i)})$, for some $|b^{(i)}| = n$ 
\begin{enumerate}
\item The following bound on the supremum of $Q_\beta$ holds.
\bq
\max_{\{v^{(i)}\}} | Q(v^{(1)},v^{(2)},v^{(3)},v^{(4)}) |   \leq e^{n \beta \nu},
\eq
where $\nu = \max_{j} |\nu_j|$ is the largest single particle frequency of $H$.

\item
Furthermore $|Q(v^{(1)},v^{(2)},v^{(3)},v^{(4)})| > 0 $ only when,\\
$b^{(1)}\oplus b^{(2)}\oplus b^{(3)} \oplus b^{(4)} \in \{(00),(11)\}^N$, where $\oplus$ denotes addition modulo 2.
\end{enumerate}

\end{proposition}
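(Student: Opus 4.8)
The plan is to prove the two items separately, both resting on the product structure of the canonical generator from Proposition \ref{def:canFermDavies}: the fixed point factorizes as $\sigma=\prod_{k=1}^{N}\sigma_k$ with commuting $\sigma_k$ supported on the Majorana pair $\{w_{2k-1},w_{2k}\}$, and each $v^{(i)}=f(b^{(i)})$ is a product of the single-mode operators $f_k(b^{(i)}_{2k-1},b^{(i)}_{2k})$, up to the unitary prefactor $\prod_k w_{2k-1}w_{2k}$ that appears in the odd sector.

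For item 1, I would first record the elementary inequality $|Q(v^{(1)},v^{(2)},v^{(3)},v^{(4)})|\le\prod_{i=1}^{4}\|v^{(i)}\|_{4,\sigma}$, obtained exactly as in the proof of Theorem \ref{thm:LSproduct}: writing $Q=\tr{A^\dagger B C^\dagger D}$ with $A=\sigma^{1/8}v^{(1)}\sigma^{1/8}$, $B=\sigma^{1/8}v^{(2)}\sigma^{1/8}$, and so on, a Cauchy--Schwarz step for the Hilbert--Schmidt inner product followed by the Schatten Hölder bound $\|XY\|_2\le\|X\|_4\|Y\|_4$ gives the claim, since $\|A\|_4=\|v^{(1)}\|_{4,\sigma}$. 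It then suffices to show $\|f(b)\|_{4,\sigma}\le e^{|b|\beta\nu/4}$. Using the product structure, $\|f(b)\|_{4,\sigma}^{4}=\prod_{k=1}^{N}\|f_k(b_{2k-1},b_{2k})\|_{4,\sigma_k}^{4}$: the unitary prefactor in the odd sector drops out because $w_{2k-1}w_{2k}$ is unitary and commutes with $\sigma_k$, and the single-mode blocks $\sigma_k^{1/8}f_k\sigma_k^{1/8}$ are supported on disjoint modes with the $\sigma_k$ even, so the trace genuinely factorizes. The single-mode norms can be read off in the two-dimensional representation $w_{2k-1}\mapsto X$, $w_{2k}\mapsto Y$, $\sigma_k=(2\cosh\mu_k)^{-1}e^{\mu_k Z}$ with $\mu_k=\beta\nu_k/2$: one finds $\|f_k(0,0)\|_{4,\sigma_k}^{4}=1$, $\|f_k(1,0)\|_{4,\sigma_k}^{4}=\|f_k(0,1)\|_{4,\sigma_k}^{4}=\cosh\mu_k$, and $\|f_k(1,1)\|_{4,\sigma_k}^{4}=\cosh(3\mu_k)/\cosh\mu_k$. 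Bounding $\cosh\mu_k\le e^{|\mu_k|}$ and $\cosh(3\mu_k)/\cosh\mu_k\le e^{2|\mu_k|}$ (the latter because $e^{-3x}\le e^{x}$ for $x\ge0$), a weight-one mode contributes at most $e^{\beta\nu/2}$ and the weight-two mode $(1,1)$ at most $e^{\beta\nu}$, so $\|f(b)\|_{4,\sigma}^{4}\le e^{\beta\nu|b|}$ since the total weight $|b|=n$ counts each $(1,1)$ mode twice. Combining with the Cauchy--Schwarz bound yields $\max|Q|\le e^{n\beta\nu}$.

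For item 2, I would work in the Majorana monomial basis $\{w_A:A\subseteq\{1,\dots,2N\}\}$ and use the trace selection rule $\tr{w_{A_1}\cdots w_{A_m}}=0$ unless the symmetric difference $A_1\oplus\cdots\oplus A_m$ is empty. For each $k$ introduce the $\{0,1\}$-valued quantity $\pi_k(A)=\bigl(|A\cap\{2k-1\}|+|A\cap\{2k\}|\bigr)\bmod 2$, which is additive under $\oplus$. Inspecting the four cases shows that $f_k(b_{2k-1},b_{2k})$ expands only into $w_A$ with $A\subseteq\{2k-1,2k\}$ and $\pi_k(A)=b_{2k-1}\oplus b_{2k}$ (and the same holds for $\tilde f_k$, since multiplying by $w_{2k-1}w_{2k}$ changes $\pi_k$ by $1\oplus1=0$), while each of the four $\sigma^{1/4}$ factors expands only into $w_A$ with $\pi_{k'}(A)=0$ for all $k'$. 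Hence every monomial $w_A$ occurring in the expansion of $v^{(i)}$, or of $v^{(i)\dagger}$, has $\pi_k$-value $b^{(i)}_{2k-1}\oplus b^{(i)}_{2k}$ for every $k$. Expanding $Q$ into a sum of traces of Majorana monomials and applying $\pi_k$ to the selection rule, every surviving term requires $\bigoplus_{i=1}^{4}(b^{(i)}_{2k-1}\oplus b^{(i)}_{2k})=0$ for each $k$; writing $c=b^{(1)}\oplus b^{(2)}\oplus b^{(3)}\oplus b^{(4)}$ this says $c_{2k-1}=c_{2k}$ for all $k$, i.e. $c\in\{(00),(11)\}^{N}$. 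Therefore, if this condition fails the entire expansion vanishes term by term and $Q=0$, which is the contrapositive of the assertion.

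The routine components are the single-mode $\bL_4$ computations and the Cauchy--Schwarz/Hölder manipulation; these are short once set up. The step that demands real care is the fermionic sign bookkeeping: justifying that $\|f(b)\|_{4,\sigma}$ factorizes over modes despite the anticommutation of single-Majorana factors (clean once one observes that each single-mode block $(\sigma_k^{1/8}f_k\sigma_k^{1/8})^\dagger(\sigma_k^{1/8}f_k\sigma_k^{1/8})$ is an \emph{even} operator supported on one mode, so the four-fold product is a genuine tensor product whose trace splits), and making the $\pi_k$-grading argument in item 2 airtight. Neither is a serious obstacle; the block/product structure established in Proposition \ref{def:canFermDavies} and Lemma \ref{block-structure-ferm} does the essential work.
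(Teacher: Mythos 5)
Your proof is correct. For part 2 it is essentially the paper's argument in more formal dress: the paper factorizes $Q$ into single-mode partial traces and observes that a factor survives only if an even number of Majorana operators from mode $k$ are paired, which is precisely your $\pi_k$-grading condition $\bigoplus_{i}\bigl(b^{(i)}_{2k-1}\oplus b^{(i)}_{2k}\bigr)=0$. For part 1 you take a genuinely (if mildly) different route. The paper works directly with the factorized expression: it argues, by viewing each mode trace as a Hilbert--Schmidt inner product, that the single-mode factor is maximized when all four tuples $(b^{(i)}_{2k-1},b^{(i)}_{2k})$ coincide, checks the four diagonal candidates, finds the maximum $\cosh(\tfrac{3}{2}\beta\nu_k)/\cosh(\tfrac{1}{2}\beta\nu_k)\le e^{\beta|\nu_k|}$ at $(1,1)$, and multiplies over the at most $n$ occupied modes. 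You instead import the Cauchy--Schwarz/H\"older step from the proof of Theorem \ref{thm:LSproduct} to get $|Q|\le\prod_i\|v^{(i)}\|_{4,\sigma}$ and then compute the single-mode $4$-norms explicitly, arriving at the same quantity $\cosh(3\mu_k)/\cosh(\mu_k)$ for the $(1,1)$ mode. Your version replaces the paper's somewhat informal ``the maximum is attained on the diagonal'' claim with a standard norm inequality, and your per-excitation bookkeeping in fact yields the slightly sharper intermediate bound $e^{n\beta\nu/2}$ before you relax it to the stated $e^{n\beta\nu}$. Your explicit justification that the mode-by-mode factorization survives the fermionic signs (each single-mode block being even) addresses a point the paper glosses over by simply taking absolute values of the factors. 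Both arguments are sound; yours is, if anything, a little more careful at the two delicate steps.
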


\proof{ For ease of notation, we define 
\be
\eta_k = \left[2 \cosh\left(\frac{\beta \nu_k}{2}\right)\right]^{-1}\exp\left(-\frac{i\beta \nu_k}{2}  w_{2k-1} w_{2k}\right),
\ee
so that $\sigma = \Pi_{k=1}^N \eta_k$. We have that for any power of $\alpha \in \bR$, $[\tilde{f}_k(a,b),\eta^\alpha_j] = [f_k(a,b),\eta^\alpha_j] = 0$, whenever $k \neq j$. This follows because $\eta_k$ is quadratic in the Majorana modes. Furthermore, we naturally have that all Majorana fermions anti-commute at different sites. Therefore, the absolute value of $Q(v^{(1)},v^{(2)},v^{(3)},v^{(4)})$ can be written as a product of partial traces each pertaining to mode $k$. We consider the even mode case (i.e. $|b|$ even) first

\begin{align} \label{prod-terms}
 \left|Q(v^{(1)},v^{(2)},v^{(3)},v^{(4)})\right| &= \left|\tr{\sigma^{1/4} f(b^{(1)})\sigma^{1/4} f(b^{(2)})\sigma^{1/4} f(b^{(3)})\sigma^{1/4} f(b^{(4)})}\right| \nonumber \\ 
= & \prod_{k=1}^N \left|\ptr{k}{\eta_k^{1/4}f^\dag_k(b^{(1)})^\dag\eta_k^{1/4}f_k(b^{(2)}) \eta_k^{1/4}f^\dag_k(b^{(3)}) \eta_k^{1/4}f_k(b^{(4)})}\right|. \no
\end{align}

Moreover, observe that at each side $[w_{2k-1}w_{2k},\eta^\alpha_k]=0$ and since $\tilde{f}_k(a,b) = w_{2k-1}w_{2k}f_k(a,b)$, we get that the same factorization argument also holds for the odd mode case. 

\begin{align}
&\ptr{k}{\eta_k^{1/4}\tilde{f}^\dag_k(b^{(1)})^\dag\eta_k^{1/4}\tilde{f}_k(b^{(2)}) \eta_k^{1/4}\tilde{f}^\dag_k(b^{(3)}) \eta_k^{1/4}\tilde{f}_k(b^{(4)})} \no
=& \ptr{k}{\eta_k^{1/4}f^\dag_k(b^{(1)})^\dag\eta_k^{1/4}f_k(b^{(2)}) \eta_k^{1/4}f^\dag_k(b^{(3)}) \eta_k^{1/4}f_k(b^{(4)})}, \no
\end{align}
since the factors $w_{2k-1}w_{2k}$ cancel. It therefore suffices to discuss the even mode case only, since the $Q$ functions will behave in the same way for the odd mode case.\\

1. Each partial trace $\ptr{k}{\cdot}$  appearing as a factor has $4^4$ possible input values of $\{(b^{(i)}_{2k-1},b^{(i)}_{2k})\}_{i=1\ldots4}$. 
By interpreting the trace as a Hilbert-Schmidt inner product of operator in different forms, we may conclude that  all the tuples $(b^{(i)}_{2k-1},b^{(i)}_{2k})$ should be equal for all $i$ in order to maximize the trace. Form the remaining $4$ candidates, it can be verified directly that the largest value is always reached for $(b^{(i)}_{2k-1},b^{(i)}_{2k})=(1,1)$ and is given by

\be
\tr{\eta_k^{1/4}f^\dag_k(1,1)^\dag\eta_k^{1/4}f_k(1,1) \eta_k^{1/4}f^\dag_k(1,1) \eta_k^{1/4}f_k(1,1)} = \frac{\cosh\left(\frac{3}{2}\beta\nu_k\right)}{\cosh\left(\frac{1}{2}\beta\nu_k\right)} \leq \exp\left(\beta |\nu_k| \right).
\ee

Choosing $\nu = \max_k |\nu_k|$, note that each factor in Eqn. (\ref{prod-terms}) can contribute at most $\exp\left(\beta\nu \right)$. That is for $|b^{(i)}| = n$ excitations, we always find the bound $|Q(v^{(1)},v^{(2)},v^{(3)},v^{(4)})| \leq \exp\left(\beta\nu \right)^n$ as stated above.\\

2. The only contributions that can remain in Eqn. (\ref{prod-terms}), are those from factors which behave as $\ptr{k}{w_{2k-1}w_{2k}\eta_k^\alpha}$ and $\ptr{k}{\eta_k^\alpha}$ itself for some power $\alpha \in \bR$, since the partial trace over any product of Majorana modes not proportional to the identity vanishes.  Then, we have to require that locally at each mode $k$, an even number of Majorana modes get paired. This implies the condition in the proposition above. \qed}
\bigskip

We can now state the essential norm bound for each subspace of $n$ excitations $\cB_n$. 

\begin{lemma}
Let $f_n \in \cB_n$ denote an element with $n$- fermionic excitations of the canonical fermionic Davies generator, then the following norm bound holds
\be
	\|f_n\|_{4,\sigma}^4 \leq 2^{8n}\exp\left(n\beta \nu \right) \|f_n\|_{2,\sigma}^4,
\ee 
where $\nu = \max_{k} |\nu_k|$ denotes the maximal single particle frequency of $H$.
\end{lemma}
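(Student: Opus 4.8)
The plan is to mirror the proof of Theorem~\ref{thm:LSproduct}, with single fermionic modes in the role of the tensor factors. First I would expand $f_n$ in the eigen-operator basis supplied by Lemma~\ref{block-structure-ferm}: the $f(b)$ with $|b|=n$ span $\cB_n$, and a short mode-by-mode computation (using the $\sqrt{\cosh(\beta\nu_k/2)}$ normalizations built into Definition~\ref{excitations}, together with $\tr{\eta_k f_k(u)}=0$ for $u\neq(0,0)$ and $\avr{w_{2k-1},w_{2k}}_{\sigma_k}=0$) shows they are orthonormal for $\avr{\cdot,\cdot}_\sigma$. Hence $f_n=\sum_{|b|=n}\alpha_b\,f(b)$ with $\|f_n\|_{2,\sigma}^2=\sum_b|\alpha_b|^2$. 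Since $\sigma^{1/8}$ is Hermitian, $\|f_n\|_{4,\sigma}^4=\tr{|\sigma^{1/8}f_n\sigma^{1/8}|^4}=Q(f_n,f_n,f_n,f_n)$ for the quartic form $Q$ of Proposition~\ref{Q-functionBND}, so expanding by (conjugate-)linearity, $\|f_n\|_{4,\sigma}^4=\sum_{b^{(1)},\dots,b^{(4)}}\alpha^*_{b^{(1)}}\alpha_{b^{(2)}}\alpha^*_{b^{(3)}}\alpha_{b^{(4)}}\,Q(f(b^{(1)}),f(b^{(2)}),f(b^{(3)}),f(b^{(4)}))$, each term satisfying the hypotheses of Proposition~\ref{Q-functionBND} since all $|b^{(i)}|=n$.

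Next I would invoke Proposition~\ref{Q-functionBND}. Part~1 bounds every $Q$-factor by $e^{n\beta\nu}$, so $\|f_n\|_{4,\sigma}^4\le e^{n\beta\nu}\,S_n$ with $S_n:=\sum'|\alpha_{b^{(1)}}|\,|\alpha_{b^{(2)}}|\,|\alpha_{b^{(3)}}|\,|\alpha_{b^{(4)}}|$, the primed sum running over the tuples for which $Q$ does not vanish; it then remains to prove $S_n\le 2^{8n}\|f_n\|_{2,\sigma}^4$. Part~2 says a surviving tuple has $b^{(1)}\oplus b^{(2)}\oplus b^{(3)}\oplus b^{(4)}$ mode-homogeneous, but I would actually use the slightly sharper fact, visible inside the proof of that Proposition, that the local factor at mode pair $k$ already vanishes as soon as $k$ carries a nontrivial single-mode operator in exactly one of the four slots (because $\tr{\eta_k f_k(u)}=0$ for $u\neq(0,0)$). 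Consequently, a surviving tuple has the property that \emph{every} mode pair occupied by some $b^{(i)}$ is occupied by at least two of them --- the fermionic analogue of the ``no site with a lone excitation'' condition used in Theorem~\ref{thm:LSproduct}. Writing $M(b)\subseteq\{1,\dots,N\}$ for the set of mode pairs on which $b$ is supported, so $|M(b)|\le|b|=n$, this forces $|\bigcup_i M(b^{(i)})|\le 2n$ on a surviving tuple; that bound is precisely what will keep all subsequent counts independent of $N$.

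The estimate $S_n\le 2^{8n}\|f_n\|_{2,\sigma}^4$ is the heart of the argument and the step I expect to be most delicate. I would run the Cauchy--Schwarz-plus-counting chain of Theorem~\ref{thm:LSproduct} essentially verbatim: encode each $b^{(i)}$ by the pair $(M(b^{(i)}),\kappa^{(i)})$, where the label $\kappa^{(i)}$ records, for each occupied mode pair, which of the three nontrivial single-mode operators $w_{2k-1},\,w_{2k},\,w_{2k-1}w_{2k}$ appears; use the $\ge 2$-occupancy to write $M(b^{(i)})=\bigcup_{j\neq i}\big(M(b^{(i)})\cap M(b^{(j)})\big)$; apply Cauchy--Schwarz successively over the six intersection sets $M^{(ij)}=M(b^{(i)})\cap M(b^{(j)})$; and close the estimate by counting the number of ways a set of size $\le n$ decomposes over its three intersection partners together with the number of admissible labels (at most three per occupied mode pair, subject to $|b^{(i)}|=n$). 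Specialising the constants of that computation to one fermionic mode (a two-dimensional local Hilbert space, four-dimensional operator algebra) --- i.e. the $d=2$ instance of the product-channel bound $\|f_n\|_{4,\sigma}^4\le s^{\,n}(2d)^{4n}\|f_n\|_{2,\sigma}^4$ with $s^{\,n}$ replaced by $e^{n\beta\nu}$ --- reproduces $S_n\le 2^{8n}\|f_n\|_{2,\sigma}^4$, and hence $\|f_n\|_{4,\sigma}^4\le 2^{8n}e^{n\beta\nu}\|f_n\|_{2,\sigma}^4$. The two obstacles I anticipate are the bookkeeping in this combinatorial reduction (in particular keeping the labels $\kappa^{(i)}$ consistent with the fixed occupation number and handling the tuple-versus-function subtlety for labels, exactly as in Theorem~\ref{thm:LSproduct}) and the sharpening of Part~2 of Proposition~\ref{Q-functionBND} to the $\ge 2$-occupancy condition, which is essential since the literal mode-homogeneity condition alone would permit a lone occupied mode pair and therefore an $N$-dependent count.
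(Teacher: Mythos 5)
Your proposal follows essentially the same route as the paper: expand $f_n$ in the orthonormal eigen-operators $f(b)$, bound each $Q$-factor by $e^{n\beta\nu}$ via Proposition~\ref{Q-functionBND}, and then control the constrained sum $S_n$ by the same encoding into occupied mode-pair sets $X^{(i)}$, intersection sets $X^{(ij)}$, and labels $K^{(i)}\in\{(0,1),(1,0),(1,1)\}^{|X^{(i)}|}$ with repeated Cauchy--Schwarz and the $2^{2n}$/$4^n$ counting, yielding $S_n\le 2^{8n}\|f_n\|_{2,\sigma}^4$. Your observation that one needs the ``at least two of the four tuples occupy each occupied mode pair'' condition (rather than only the literal XOR statement of Proposition~\ref{Q-functionBND}) is exactly the constraint the paper's primed sum imposes, so the argument goes through as you describe.
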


\proof{ The proof follows the ideas from \cite{bodineau2000hypercontractivity}. Any $f_n \in \cB_n$ can be written as 
\begin{align}
f_n = \sum_{|b|=n} \alpha({b})f({b}).
\end{align}
Note that the operators in Definition \ref{excitations} were chosen to be orthonormal with respect to the weighted inner product $\tr{\sigma^{1/2} f_n(b^{(1)}) \sigma^{1/2} f_n(b^{(2)})}=\delta_{b^{(1)},b^{(2)}}$.  We therefore have that $\|f_n\|_{2,\sigma}^2 = \sum_{{b}} |\alpha({b})|^2$.  Direct evaluation of the $4$-norm of $f$ yields 
\begin{align}
\|f_n\|_{4,\sigma}^4  = \sum_{{b^{(1)}},{b^{(2)}},{b^{(3)}},{b^{(4)}}} \alpha({b^{(1)}})^*\alpha({b^{(2)}})\alpha({b^{(3)}})^*\alpha({b^{(4)}}) Q(v^{(1)},v^{(2)},v^{(3)},v^{(4)}).
\end{align}

By Proposition \ref{Q-functionBND}, we always have $ |Q(v^{(1)},v^{(2)},v^{(3)},v^{(4)})| \leq \exp(n \beta \nu)$. Moreover, recall that $Q$ is only different from zero, when more than one mode is occupied at each site, or pair of sites. We therefore have a bound with the constrained sum $\sum'$ taking into account the occupancy as stated in Proposition \ref{Q-functionBND}:

\begin{align}\label{bnd_2nrm_ferm}
\|f_n\|_{4,\sigma}^4 &\leq \exp(n\beta \nu)  \mbox{$\sum'$}_{{b^{(1)}},{b^{(2)}},{b^{(3)}},{b^{(4)}}} | \alpha({b^{(1)}})| |\alpha({b^{(2)}})| |\alpha({b^{(3)}})| |\alpha({b^{(4)}})|
\equiv \exp(n \beta \nu) S_n.
\end{align}

Let us now express the RHS of Eqn. (\ref{bnd_2nrm_ferm}) in terms of new discrete vectors $X^{(i)} \subseteq [1,N] $,  $X^{(ij)} \subseteq [1,N]$ and $K^{(i)}\in \{ (0,1), (1,0), (1,1)\}^{|X^{(i)}|}$, which given $X^{(i)}$ is isomorphic to the set of functions $X^{(i)} \rightarrow \{ (0,1), (1,0), (1,1)\}$ defined as
\begin{align}
X^{(i)}  &:= \{ j \in [1,N] :  b^{(i)}_{2j-1} + b^{(i)}_{2j} > 0 \} \\
X^{(ij)} &:= X^{(i)} \cap X^{(j)} \\
K^{(i)}(j) &:= ( b^{(i)}_{2j-1}, b^{(i)}_{2j}) ) \text{ for all } j \in X^{(i)}.
\end{align}

It is clear that the $\{b^{(i)}\}$ determine all these quantities. In the cases where $\alpha(b^{(i)}) \neq 0$, the following converse is also true. 
The $\{K^{(i)}\}$ together with the $\{X^{(ij)}\}$ uniquely determine the $\{b^{(i)}\}$. Additional consistency relations such as $X^{(12)}\cap X^{(34)} = X^{(23)}\cap X^{(14)}$ may be required to guarantee the existence of the $\{b^{(i)}\}$.
In terms of the new variables,

\begin{align}
S_n= \sum_{\{K^{(i)}\}} \sum_{\{X^{(ij)}\}} |\alpha(b^{(1)})| |\alpha(b^{(2)})| |\alpha(b^{(3)})| |\alpha(b^{(4)})|.
\end{align}

Recall that $X^{(ij)} = X^{(ji)}$, so that we can now repeatedly apply the Cauchy-Schwartz inequality. We  obtain first for $X^{(12)}$ and $X^{(34)}$

\begin{align}
S_n \leq  \sum_{\{K^{(i)}\}} \sum_{X^{(13)},X^{(24)},X^{(14)},X^{(23)}} & \sqrt{\sum_{X^{(12)}} |\alpha(b^{(1)})|^2}\sqrt{\sum_{X^{(12)}} |\alpha(b^{(2)})|^2}\no
&\sqrt{\sum_{X^{(34)}} |\alpha(b^{(3)})|^2}\sqrt{\sum_{X^{(34)}} |\alpha(b^{(4)})|^2}.
\end{align}

After four additional applications of Cauchy-Schwartz inequality  to the remaining summation indices $X^{(ij)}$ one obtains the following bound

\begin{align}
S_n \leq  \sum_{\{K^{(i)}\}} & \sqrt{\sum_{X^{(12)},X^{(13)},X^{(14)}} |\alpha(b^{(1)})|^2} \sqrt{\sum_{X^{(21)},X^{(23)},X^{(24)}} |\alpha(b^{(2)})|^2}\no & \sqrt{\sum_{X^{(31)},X^{(32)},X^{(34)}} |\alpha(b^{(3)})|^2}\sqrt{\sum_{X^{(41)},X^{(42)},X^{(43)}}|\alpha(b^{(4)})|^2}.
\end{align}

Observing that for a fixed $b^{(i)}$ with $n$ non-zero bits, there are at most $2^{2n}$ subsets $X^{(ij)}$ such that $\bigcup_j X^{(ij)} = X^{(i)}$ that are summed over, we have the final bound
 
\begin{align} 
S_n \leq 2^{4n}\prod_{i=1}^4 \sum_{K^{(i)}}  \sqrt{\sum_{b^{(i)}}|\alpha(b^{(i)})|^2} \leq 2^{8 n}\|f_n\|_{\sigma,2}^4.
\end{align}

The last inequality follows from the application of Cauchy-Schwartz to the sum over $K^{(i)}$ and observing that $K^{(i)}$ may take no more than $4^n$ different values under the assumption that $|b^{(i)}|=n$. Together with (\ref{bnd_2nrm_ferm}) we arrive at the claim. \qed}
\bigskip

We can now state the logarithmic Sobolev constant for the canonical class of linear fermionic Davies generators which only depends on two numbers, the spectral gap $\Lambda$, which is temperature dependent and the largest single mode frequency $\nu$. These numbers are model dependent and have to be evaluated for each system individually. \\

\begin{theorem}
Let $\cL_\beta$ denote a linear fermionic Davies Generator in canonical form. Moreover let $\Lambda = \min_k \Lambda_k$ and $\nu = \max_k |\nu_k|$,
denote the weakest mode coupling and the largest single particle frequency respectively. Then the log-Sobolev constant is bounded by
\be
	\frac{\Lambda}{\beta \nu + 14} \leq \alpha \leq \Lambda
\ee
\end{theorem}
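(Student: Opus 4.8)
\emph{Proof proposal.} The plan is to recognize that the final statement is a direct specialization of the general machinery of Section~\ref{sec:generalHyper}: all three hypotheses of Lemma~\ref{ziggybound} have in fact already been verified for the canonical fermionic Davies generator. Concretely, by Lemma~\ref{block-structure-ferm} the operator space decomposes as $\cB = \bigoplus_{n=0}^{2N}\cB_n$ with each $\cB_n$ invariant under $T_t = \exp(t\cL_\beta)$, with $\cB_0 = \mathrm{span}\{\1\}$, and with $\mathrm{spec}(\cL_\beta|_{\cB_n}) \subseteq (-\infty, -\Lambda n]$ for $\Lambda = \min_k\Lambda_k$ independent of $n$; this gives conditions~1 and~2. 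The subspace norm bound just proven, $\|f_n\|_{4,\sigma}^4 \le 2^{8n}e^{n\beta\nu}\|f_n\|_{2,\sigma}^4$, gives condition~3 with the constant
\be
C \;=\; 2^{2}\,e^{\beta\nu/4}\,,
\ee
obtained by taking fourth roots of the bound. Since Davies generators are reversible with respect to the Gibbs state $\sigma \propto e^{-\beta H}$, the semigroup meets the hypotheses of Corollary~\ref{main-corr}.

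First I would invoke Corollary~\ref{main-corr} — that is, apply Lemma~\ref{ziggybound} with $t_4 = \log(2C)$ so that $M_4 = 2$, then feed this into Theorem~\ref{thm:interpolation} in the form of Eqn.~(\ref{from24}) — to obtain
\be
\alpha \;\ge\; \frac{\Lambda}{\log\!\left(C^{4}\,2^{8}\,e^{2}\right)}\,.
\ee
Then I would evaluate the denominator: $C^{4} = 2^{8}e^{\beta\nu}$, hence $C^{4}2^{8}e^{2} = 2^{16}e^{\beta\nu+2}$ and $\log(C^{4}2^{8}e^{2}) = \beta\nu + 2 + 16\ln 2 \approx \beta\nu + 13.09$. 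Because $2 + 16\ln 2 < 14$, this already yields the asserted lower bound $\alpha \ge \Lambda/(\beta\nu + 14)$.

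For the upper bound I would use the general fact that the log-Sobolev constant of a primitive reversible Liouvillian never exceeds its spectral gap \cite{olkiewicz1999hypercontractivity,kastoryano2013quantum}, together with the observation — implicit in Lemma~\ref{block-structure-ferm} — that the spectral gap of $\cL_\beta$ is exactly $\Lambda$: the single-excitation block $\cB_1$ contains an eigenvalue $-\Lambda_j$ for each mode $j$ (from $f(b)$ with a single set bit), so the gap is at most $\min_j\Lambda_j = \Lambda$, while conditions~2 and the fact that every block $\cB_n$ with $n\ge 2$ contributes eigenvalues of magnitude $\ge 2\Lambda$ show it is at least $\Lambda$. Optionally I would remark that, since Davies generators are $\bL_p$-regular, Theorem~\ref{thm:hypervsLS} upgrades this estimate on $\alpha$ to the full hypercontractive norm bound $\|T_t(f)\|_{p(t),\sigma} \le \|f\|_{2,\sigma}$ for all $t\ge 0$ with $p(t) = 1 + e^{2\alpha t}$.

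There is essentially no obstacle in this last step: the substantive analysis — the $Q$-function estimates of Proposition~\ref{Q-functionBND} and the iterated Cauchy–Schwarz argument producing the $2^{8n}e^{n\beta\nu}$ factor — has already been carried out, and what remains is bookkeeping. If anything, the only point requiring a little care is the numerical rounding $2 + 16\ln 2 < 14$ used to present the clean denominator $\beta\nu + 14$; one could equally state the marginally sharper bound $\alpha \ge \Lambda/(\beta\nu + 2 + 16\ln 2)$.
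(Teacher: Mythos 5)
Your proposal is correct and follows exactly the paper's own route: the authors likewise feed Lemma \ref{block-structure-ferm} and the subspace norm bound into Corollary \ref{main-corr} with $C = 2^{2}e^{\beta\nu/4}$, compute $\log(C^{4}2^{8}e^{2}) = \beta\nu + \log(2^{16}e^{2}) \leq \beta\nu + 14$, and obtain the upper bound from the general spectral-gap bound on the log-Sobolev constant. Your observation that the slightly sharper denominator $\beta\nu + 2 + 16\ln 2$ is available is a fair remark but not a substantive difference.
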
 

\proof{ The result follows from the previous discussion. Lemma \ref{block-structure-ferm} and Lemma \ref{bnd_2nrm_ferm}  give the necessary conditions for the Corollary \ref{main-corr}
with $\lambda = \Lambda$ and $C = 2^2\exp\left(\beta/4 \nu \right)$. Hence, 
\bq 
\alpha \geq \frac{\lambda}{\log\left(C^4 2^8e^2\right)} = \frac{\Lambda}{\log\left(e^{\beta \nu}2^{16}e^2\right)} \geq \frac{\Lambda}{\beta \nu + 14} .
\eq 
The upper bound on $\alpha$ follows from the spectral gap bound on the log-Sobolev constant \cite{kastoryano2013quantum}. \qed}

\section{Conclusions}
We have proven hypercontractivity for a large class of quasi-free systems. Most importantly, the existence of a log-Sobolev constant independent of the number of identical subsystems has been rigorously established for the case of  independent particles. Our proof provides an explicit lower-bound for the log-Sobolev constant which can be of great value for obtaining quantitative statements on the convergence time to equilibrium of product systems.

As an example, we study the thermalization of graph state Hamiltonians. Although the thermalization of such maps is not of strict product form, a basis transformation shows that the steady state is. Our example addresses a problem which has attracted genuine interest in the quantum information literature.
Indeed, while it has been proven that for any graph state it is possible to engineer a dissipative generator that efficiently prepares it \cite{kraus2008preparation,kastoryano2012cutoff}, the  process can be quite artificial and difficult to engineer in a real experiment.  Our proof guarantees that by naturally cooling the system below a fixed  (inverse) temperature to be $\beta = O(\ln(\frac{N}{\epsilon}))$, it is possible to guarantee $\epsilon$ approximation to graph states in a time $\tau = O(\ln^2(\frac{N}{\epsilon}))$. 

Finally, we have shown that our results also hold for free-fermionic Hamiltonians linearly coupled to a thermal bath. We prove that these generators have a log-Sobolev constant depending only on the smallest gap among the resulting single fermion Liouvillians and smallest weight among the single fermion equilibrium density matrices.
A general conclusion that one may obtain for such systems is that they either a) rapidly converge to the thermal state in a time no greater than $O(\log(N))$, or  b) at least one fermionic mode can be identified which does not equilibrate. The framework developed for free-fermionic Hamiltonians coupled to a thermal bath is ideally suited to study ``quasi-particle poisoning'' \cite{rainis2012majorana,Mazza2013} in Kitaev's quantum wire (Majorana chain) \cite{kitaev2001unpaired}. 
\bigskip

\textbf{Acknoweldgements}:
We thank Jens Eisert, Henrik Wilming for helpful discussions. We especially thank Leandro Aolita for reminding us of the quasi-product property of graph state Hamiltonians. This work was supported by the Institute for Quantum Information and Matter, a NSF Physics Frontiers Center with support of the Gordon and Betty Moore Foundation (Grants No. PHY-0803371 and PHY-1125565). K.T. also acknowledges the support from the Erwin Schr\"odinger fellowship, Austrian Science Fund (FWF): J 3219-N16.  M.J.K acknowledges support from the Alexander von Humboldt foundation and the EU (SIQS, RAQUEL).

\end{document}